\documentclass[12pt]{article}%
\usepackage{amsmath,amssymb,amsthm,amsfonts}
\usepackage{wasysym}
\usepackage{subcaption}
\usepackage{amsmath}
\usepackage{amsfonts}
\usepackage{amssymb}
\usepackage{graphicx}
\usepackage{color}%

\providecommand{\U}[1]{\protect\rule{.1in}{.1in}}
\setlength{\oddsidemargin}{0.0in}
\setlength{\textwidth}{6.5in}
\setlength{\baselineskip}{19pt}
\setlength{\parskip}{19pt}
\setlength{\topmargin}{0.0in}
\setlength{\textheight}{8.5in}
\newtheorem{theorem}{Theorem}

\newtheorem{corollary}[theorem]{Corollary}

\newtheorem{lemma}[theorem]{Lemma}

\begin{document}

\title{RS Flip-Flop Circuit Dynamics Revisited}
\author{Aminur Rahman\\Department of Mathematical Sciences\\New Jersey Institute of Technology\\Newark, NJ 07102-1982\\ar276@njit.edu\\* * * \\Denis Blackmore\\Department of Mathematical Sciences\\New Jersey Institute of Technology\\Newark, NJ 07102-1982\\denis.l.blackmore@njit.edu}
\date{}
\maketitle

\begin{abstract}
Logical RS flip-flop circuits are investigated once again in the context of
discrete planar dynamical systems, but this time starting with simple bilinear
(minimal) component models based on fundamental principles. The dynamics of
the minimal model is described in detail, and shown to exhibit some of the
expected properties, but not the chaotic regimes typically found in
simulations of physical realizations of chaotic RS flip-flop circuits. Any
physical realization of a chaotic logical circuit must necessarily involve
small perturbations - usually with quite large or even nonexisting derivatives
- and possibly some symmetry-breaking. Therefore, perturbed forms of the
minimal model are also analyzed in considerable detail. It is proved that
perturbed minimal models can exhibit chaotic regimes, sometimes associated
with chaotic strange attractors, as well as some of the bifurcation features
present in several more elaborate and less fundamentally grounded dynamical
models that have been investigated in the recent literature. Validation of the
approach developed is provided by some comparisons with (mainly simulated)
dynamical results obtained from more traditional investigations.

\end{abstract}

\noindent\textbf{Keywords:} RS flip-flop circuits, minimal model,
Neimark--Sacker bifurcation, chaos, chaotic strange attractors, Smale
horseshoe

\medskip

\noindent\textbf{AMS Subject Classification: }37C05, 37C29, 37D45, 94C05

\bigskip

\textbf{Since the advent of electronic computing devices, flip-flops have
played a major role in processing arithmetic operations. The first flip-flop
to be designed was the RS flip-flop (RSFF). It has the ability to store bits,
and hence was used in adders and primitive electronic calculators. The RSFF
was even used in the video game \textquotedblleft pong\textquotedblright. The
only logical flaw of this circuit is the ambiguity of the output when the
inputs are both high voltage. In order to rectify this flaw other flip-flops
were designed. In the 1980s Leon Chua designed the simplest circuit to exhibit
chaos. In more recent years, Chua's circuit and the flaw of the RSFF have been
exploited to design chaotic flip-flops. These chaotic flip-flops have the
potential to be employed in random number generation, encryption, and fault
tolerance. However, in order to exploit their properties they need to be
studied further. Since experiments with large systems become difficult,
tractable mathematical models that are amenable to analysis via the tools of
dynamical systems theory are of particular value. The model needs to be simple
enough to simulate with ease and agree qualitatively with experiments for
small systems. We model the RSFF circuit as a discrete dynamical system -
called the minimal model - focusing on the ambiguous output case. The minimal
model is perturbed in various ways to reproduce qualitative behavior observed
in experiments for the chaotic RSFF, and the minimal approach can be readily
generalized to handle other logical circuits.}

\section{Introduction}

Logical circuits are constructed using logic gates representing propositional
connectives such as AND (conjunction), OR (disjunction) and their negation.
They, or more precisely physical approximations of ideal logical circuits,
have important applications in multiplexers, registers, pseudo-random number
generators, quantum network modeling and, in fact, virtually every
microprocessor (see, e.g.\cite{Bos}). Consequently, having efficient methods
for analyzing and predicting their behavior, such as by continuous or discrete
dynamical systems models is of great value in design, analysis and evaluation.

A well-known example with many applications (\emph{e.g}. in quantum networks
\cite{ztkbn}) is the RS Flip-Flop circuit in Fig.1, which is a feedback
circuit comprising two NOR gates (which are negations of OR gates). Note that
an OR gate has an output of 1 when at least one of the inputs is 1 and an
output of 0 when both inputs are zero, with 1 denoting true and 0 false. An
ideal \emph{RS flip-flop circuit} (\emph{RSFF circuit}) is a logical feedback
circuit represented in Fig. 1, with input/output behavior described in Table
1, which shows the \emph{set} ($S$) and \emph{reset} ($R$) inputs for the
circuit consisting of two \emph{NOR gates} with outputs $Q$ and $Q^{\prime}$.
The input to the output, denoted by $(Q_{n},Q_{n}^{\prime})\rightarrow
(Q_{n+1},Q_{n+1}^{\prime})$ may be regarded as the action of a map from the
plane $\mathbb{R}^{2}:=\{(x,y):x,y\in\mathbb{R}\}$ into itself, where
$\mathbb{R}$ denotes the real numbers and in the ideal case, the coordinates
assume the binary values $\{0,1\}$. This suggests that the behavior of the
RSFF circuit might possibly be effectively modeled by the iterates of a planar
map, which comprise a discrete dynamical system (DDS), and its perturbations
derived directly - albeit in a simplified manner - from a faithful
interpretation of the circuit itself. And this is our focus in the sequel,
which should be contrasted with our prior investigation \cite{BRS}. The model
to be studied in what follows, fundamentally grounded as it is in the basic
principles of ideal logic circuit theory, is substantially more compelling and
useful than the one investigated in \cite{BRS}, which was primarily an
\emph{ad hoc} construct designed to mimic the known behavior of RSFF realizations.

\begin{center}
\begin{figure}[th]
\centering
\includegraphics[width=3.5in]{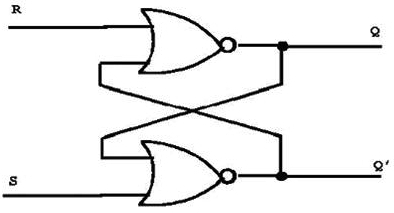}\caption{$R$-$S$ flip-flop
circuit}%
\label{circuit}%
\end{figure}
\end{center}


\noindent The binary input/output behavior, with 0 and 1 representing false
and true, respectively, is given in the following table.

\begin{center}%
\begin{tabular}
[c]{|c|c|c|c|}\hline
$S$ & $R$ & $S_{1}:=Q$ & $R_{1}:=P$\\\hline
1 & 0 & \multicolumn{1}{|c|}{$%
\begin{array}
[c]{c}%
1
\end{array}
$} & $%
\begin{array}
[c]{c}%
0
\end{array}
$\\\hline
0 & 1 & \multicolumn{1}{|c|}{$%
\begin{array}
[c]{c}%
0
\end{array}
$} & $%
\begin{array}
[c]{c}%
1
\end{array}
$\\\hline
1 & 1 & \multicolumn{1}{|c|}{$%
\begin{array}
[c]{c}%
0
\end{array}
$} & $%
\begin{array}
[c]{c}%
0
\end{array}
$\\\hline
0 & 0 & \multicolumn{1}{|c|}{1 or 0} & 0 or 1\\\hline
\end{tabular}

\medskip

Table 1. Binary input/output of $R$-$S$ flip-flop circuit
\end{center}

\medskip

From the DDS perspective, our first goal is to construct the simplest map of
the plane (based on bilinear representations of the NOR gates) that models the
logical properties of the RSFF circuit, with iterates that exhibit most of the
interesting properties that follow from basic analysis or have been observed
in the dynamics of physical realizations and their mathematical models. This
begs the question of how the dynamics of the planar map models are to be
compared with that obtained directly from measurements of physical
realizations of the RSFF circuit and its flip-flop relatives and the analysis
and simulations of the usual mathematical models, which we shall endeavor to
address in what follows.

The usual form of the mathematical models is traceable back to the pioneering
work of Moser \cite{Moser} ; namely, associated three-dimensional systems of
piecewise-smooth, first-order, nonlinear autonomous ordinary differential
equations (ODEs) obtained from applying Kirchhoff's laws to the realizations.
These realizations typically comprise such elements as capacitors, inductors
and nonlinear resistors and exhibit highly oscillatory, very unstable and even
chaotic dynamics (\emph{metastable operation}), as experimentally observed in
such studies as \cite{KA, kac, LMH}, where tunnel diodes of the type used in
Chua's circuit (see \cite{CG,chua,CWHZ,msd} ) are the key ingredients in the
construction of the nonlinear resistors. There are several connections between
the solutions of the model ODEs and iterates of maps that can be used for
dynamical comparisons, among which are the following: as observed by Hamill
\emph{et al. }\cite{hdj}, the autonomous nature of the logical circuit
equations allows dynamical analysis via the iterates (snapshots) of a fixed
time map, which can be reduced to a planar map in special cases as shown in
Kacprzak and Albicki \cite{KA} and Kacprzak \cite{kac}. In addition, the
application of a standard explicit one-step integration method is tantamount
to the iteration of the map describing the scheme, thus enabling the
(approximate) reduction from a continuous to a discrete dynamical system as
shown \emph{e.g}. in Danca \cite{Danca}. Of course, there is the well-known
method of employing Poincar\'{e} sections to analyze three-dimensional
continuous dynamical systems using two-dimensional discrete dynamical systems,
which has been employed in numerous investigations of logical circuit
realizations such as Cafagna and Grassi \cite{CG}, Murali \emph{et al}.
\cite{msd}, Okazaki \emph{et al}. \cite{okt} and Ruzbehani \emph{et al}.
\cite{rzw}.

As it turns out, logical circuits can be realized using the famous circuit of
Chua and its generalizations \cite{chua, CWHZ}, which have depended heavily
powerful tools such as Poincar\'{e} maps, Melnikov functions and normal forms
from the modern theory of dynamical systems and bifurcation theory (see
\cite{deMelo,guho,kathas, Kuz,palmel,Smale,wigbook}) for their analysis. There
is also another interesting connection between realization of logical circuits
and nonlinear maps that might afford an opportunity for comparisons with the
results obtained from our two-dimensional DDS models; namely, the approach in
the work of Ditto \emph{et al}. \cite{DMM}, which features the notion of
reconfigurable logic gates comprising connected NOR and NAND gates constructed
using 1-dimensional DDS and associated thresholds.

Our investigation begins in Section II, where we define a (minimal) planar map
model of an ideal RSFF circuit - derived directly from the logical circuit -
that plays a foundational role throughout the sequel. Moreover, we derive some
basic properties of the minimal map concerning such things as smoothness and
the existence and description of an inverse. This is followed in Section III
with a more thorough analysis of the fixed points of the minimal map -
including a local stability analysis and an analysis of stable and unstable
manifolds. As a result of this more detailed investigation, we find that the
dynamics is quite regular when the domain of the map is appropriately
restricted. Next, in Section IV, after observing that the minimal map is
$C^{1}$-structurally stable, we prove that (1-dimensional) chaos can be
generated by arbitrarily small $C^{0}$ perturbations of the map. In
particular, it is shown that a tent map can be embedded in a (1-dimensional)
stable manifold of a fixed point by such a perturbation, thereby inducing
chaotic tent dynamics. We then prove the existence of several types of more
substantial (2-dimensional) chaotic regimes for arbitrarily small $C^{0}$
perturbations in Section V. In particular, we show that arbitrarily $C^{0}$
small perturbations can be fashioned to produce transverse intersections in
homoclinic orbits and heteroclinic cycles that generate chaos, horseshoe
chaos, multihorseshoe strange chaotic attractors, and
Neimark--Sacker bifurcations. In Section VI, we illustrate our theorems by
making liberal use of numerical simulations of our perturbed models. Finally,
in Section VII, we briefly summarize our results, describe some interesting
areas of application and discuss plans for related future research.

\section{A Minimal Discrete Dynamical System Model}

To obtain the minimal DDS model for the RSFF circuit, we start with the
simplest continuous extension of the input/output map of the NOR gate to the
unit square in the plane, which is clearly given by the function
$\mathcal{N}:I^{2}\,\left(  \subset\mathbb{R}^{2}\right)  \rightarrow
I\,\left(  \subset\mathbb{R}\right)  $ defined as%
\begin{equation}
\mathcal{N}(x,y):=\left(  1-x\right)  \left(  1-y\right)  ,\label{eq1}%
\end{equation}
where $I^{2}:=I\times I:=[0,1]\times\lbrack0,1]\subset\mathbb{R}^{2}$ is the
unit square and $x$ and $y$ are the inputs. Observe that this quadratic
function is bilinear in the variables $(1-x)$ and $(1-y)$, and behaves
precisely like the pure logical NOR gate when $x,y\in\{0,1\}$. Using the
functional representation (\ref{eq1}) for each of the (NOR gate) components of
the RSFF circuit, we obtain the continuous extension for the RSFF,%

\begin{equation}
Q :=\frac{S(1-R)}{R+S-RS},\; Q^{\prime}:=\frac{R(1-S)}{R+S-RS}.
\end{equation}

Since the threshold outputs are fed back in a similar manner to the RSFF
ouputs, we make the naive assumption that the logical behavior is preserved
and we readily obtain its minimal extension to the unit square in the form of
a map $F:I^{2}\rightarrow I^{2}$ defined as%

\begin{equation}
F=\left(  \xi,\eta\right)  :I^{2}\,\left(  \subset\mathbb{R}^{2}\right)
\rightarrow I^{2}\,\left(  \subset\mathbb{R}^{2}\right)  ,\label{eq2}%
\end{equation}
where the coordinate functions are
\begin{equation}
\xi(x,y):=\frac{y(1-x)}{x+y-xy},\;\eta(x,y):=\frac{x(1-y)}{x+y-xy},\label{eq3}%
\end{equation}
with the $x$- $y$-coordinates playing the roles of the previous threshold
ouput, and $\xi$ - $\eta$ coordinates playing the roles of the current threshold ouput.

This map generates a discrete (semi-) dynamical system in terms of its forward
iterates determined by $n$-fold compositions of the map with itself, denoted
as $F^{n}$, where $n\in\mathbb{N}$, the set of natural numbers. We shall
employ the usual notation and definitions for this system; for example, the
\emph{positive semiorbit }of a point $p\in\mathbb{R}^{2}$, which we denote as
$O_{+}(p)$, is simply defined as $O_{+}(p):=\left\{  F^{n}(p):n\in
\mathbb{Z},\,n\geq0\right\}  $, and all other relevant definitions are
standard (\emph{cf}. \cite{guho, kathas, palmel, wigbook}). The minimal map
(\ref{eq2}) is real-analytic $(C^{\omega})$ on $I^{2}$ except at the origin,
where it is not even well-defined. Consequently, we should really consider $F$
to be defined on $I^{2}\smallsetminus\{(0,0)\}$ and actually on $X:=I^{2}%
\smallsetminus\{(0,0),(1,1)\}$ if we wish to avoid the origin for all forward
iterates of $F$.

\subsection{Basic analytical properties of the minimal model}

First we take note of some of the analytical properties of the map
(\ref{eq2}), which are listed in what follows. The proofs of all these results
are straightforward and left to the reader.

\begin{itemize}
\item[(A1)] The map $F\in C^{\omega}\left(  X\right)  $.

\item[(A2)] $F$ is $\mathbb{Z}_{2}$-symmetric in the sense that%
\[
R\circ F=F\circ R
\]
for the reflection $R$ in the line $y=x$.

\item[(A3)] $F\left(  X\right)  =Y:=\left\{  (\xi,\eta)\in\mathbb{R}^{2}%
:0\leq\xi,\eta\text{ and }\xi+\eta<1\right\}  \cup\left\{
(1,0),(0,1)\right\}  $.

\item[(A4)] $F\left(  \{(x,0):0<x\leq1\}\right)  =(0,1)$, $F\left(
\{(0,y):0<y\leq1\}\right)  =(1,0)$, and if $\rho(a):=\{(x,ax):0<x,a\}\cap X$
is the ray through the origin sans the origin in $X$, then $F\left(
\rho(a)\right)  \subset\{(\xi,(a-1)+a\xi):0<\xi\}\cap Y$. Moreover, as
$(x,y)\rightarrow(0,0)$ along the ray $\rho(a)$, $F(x,y)$ converges to the
point of intersection of the line defined by $x+y=1$ with the ray $\rho\left(
1/a\right)  $. This shows just how singular the formulas (3) are at the origin.

\item[(A5)] The derivative (matrix) for $F$ on $X$ is%
\begin{equation}
F^{\prime}\left(  x,y\right)  =\left(  x+y-xy\right)  ^{-2}\left(
\begin{array}
[c]{cc}%
-y & x\left(  1-x\right) \\
y\left(  1-x\right)  & -x
\end{array}
\right)  ,\label{eq4}%
\end{equation}
with determinant%
\begin{equation}
\det F^{\prime}\left(  x,y\right)  =xy\left(  x+y-xy\right)  ^{-3},\label{eq5}%
\end{equation}
which, not surprisingly in view of (A3), shows that the implicit function
theorem cannot guarantee the existence of a local smooth inverse along the $x
$- and $y$-axes.

\item[(A6)] In fact, the inverse of $F$, where it exists, is given as%
\begin{equation}
F^{-1}\left(  \xi,\eta\right)  =\left(  \frac{1-\xi-\eta}{1-\eta},\frac
{1-\xi-\eta}{1-\xi}\right)  ,\label{eq6}%
\end{equation}
which is clearly in $C^{\omega}\left(  Y\smallsetminus\{(1,0),(0,1)\}\right)
$.
\end{itemize}

\section{Dynamics of the Minimal Model}

We shall analyze the deeper dynamical aspects of the model map (2)-(3) for
various parameter ranges in the sequel, but first we dispose of some of the
more elementary properties that follow directly from the definition and
(A1)-(A6), leaving the simple proofs once again to the reader.

\begin{itemize}
\item[(D1)] If we restrict $F$ to $\hat{X}:=\{(x,y):0<x,y$ and $x+y<1\}$, and
denote this restriction by $\hat{F}$, it determines a full dynamical system
defined as%
\[
\left\{  \hat{F}^{n}:n\in\mathbb{Z}\right\}  ,
\]
which, for example, allows the definition of the full \emph{orbit }of a point
$p\in\mathring{Y}$ as
\[
O(p):=\left\{  \hat{F}^{n}(p):n\in\mathbb{Z}\right\}  .
\]

\item[(D2)] The line $y=x$ is $F$-invariant, while the $x$- and $y$-axes are
$F^{2}$-invariant.

\item[(D3)] Both of the points $(1,0)$ and $(0,1)$ are fixed points of $F^{2}
$.
\end{itemize}

\subsection{Analysis of the fixed and periodic points}

The properties of the fixed and periodic points of our model map shall be
delineated in a series of lemmas, which follow directly from the results in
the preceding sections and fundamental dynamical systems theory (as in
\cite{guho, kathas, palmel, wigbook}). Our first result is the following,
which has a simple proof that we leave to the reader.

\begin{lemma}
\label{L1} The only fixed point of $F$ in $X$ is
\[
p_{\ast}=\left(  x_{\ast},y_{\ast}\right)  =\left(  \frac{3-\sqrt{5}}%
{2}\right)  \left(  1,1\right)  \cong0.38197\left(  1,1\right)  ,
\]
which is a saddle point with eigenvalues
\[
\lambda_{s}=-\left(  \frac{3-\sqrt{5}}{2}\right)  ,\;\lambda_{u}=-\left(
\frac{1+\sqrt{5}}{2}\right)  .
\]
This fixed point has linear stable and stable manifolds given as
\[
W_{lin}^{s}\left(  p_{\ast}\right)  =\{(x,x):x\in\mathbb{R}\}\text{ \emph{and}
}W^{s}\left(  p_{\ast}\right)  =\{(x,x):0<x<1\},
\]
and the linear unstable manifold
\[
W_{lin}^{u}\left(  p_{\ast}\right)  =\{(x,-x+2x_{\ast}):x\in\mathbb{R}\}.
\]

\end{lemma}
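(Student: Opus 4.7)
My plan is to attack the lemma in four stages: locate the fixed point(s), linearize, read off eigendata, and then globalize the stable manifold.

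First, I would translate the fixed point condition $F(x,y)=(x,y)$ into the pair of polynomial equations obtained by clearing the common denominator $D:=x+y-xy$, namely $y(1-x)=xD$ and $x(1-y)=yD$. Expanding gives the clean form $y(1-x)^{2}=x^{2}$ and $x(1-y)^{2}=y^{2}$. On the diagonal $y=x$ (which (A2) and (D2) suggest is the natural place to look), either equation collapses to $(1-x)^{2}=x$, i.e.\ $x^{2}-3x+1=0$, whose unique root in $(0,1)$ is $x_{\ast}=(3-\sqrt{5})/2$. To rule out off-diagonal fixed points I would multiply the two equations to get $(1-x)^{2}(1-y)^{2}=xy$, set $u:=\sqrt{y}$, $v:=\sqrt{x}$, and reduce the system to $u^{3}+2u^{2}-1=0$; since the derivative $3u^{2}+4u$ is positive for $u>0$, this cubic has a single positive root, forcing $u=v$ and hence $x=y$.

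Next I would evaluate the Jacobian (A5) at $p_{\ast}$. Using $1-x_{\ast}=(\sqrt{5}-1)/2$ and the identity $x_{\ast}=(1-x_{\ast})^{2}$, one obtains $D_{\ast}=x_{\ast}(2-x_{\ast})=1-x_{\ast}$, so $D_{\ast}^{-2}=1/x_{\ast}$, and the derivative matrix collapses to the symmetric form
\[
F'(p_{\ast})=\begin{pmatrix}-1 & 1-x_{\ast}\\ 1-x_{\ast} & -1\end{pmatrix}.
\]
For any matrix $\bigl(\begin{smallmatrix}a&b\\b&a\end{smallmatrix}\bigr)$ the eigenvalues are $a\pm b$ with eigenvectors $(1,\pm 1)$, giving $\lambda_{s}=-x_{\ast}=-(3-\sqrt{5})/2$ (contracting, eigenvector $(1,1)$) and $\lambda_{u}=-(2-x_{\ast})=-(1+\sqrt{5})/2$ (expanding, eigenvector $(1,-1)$). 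The two claimed linear (in)variant manifolds are then simply the lines through $p_{\ast}$ in these eigendirections: $\{y=x\}$ and $\{y=-x+2x_{\ast}\}$.

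Finally, for the global stable manifold I would exploit (D2): the diagonal $\{y=x\}$ is $F$-invariant, so on $\{(x,x):0<x<1\}$ the map reduces to the one-dimensional $f(x)=(1-x)/(2-x)$. A direct check gives $f(x_{\ast})=x_{\ast}$ and $f'(x)=-1/(2-x)^{2}$, whose absolute value is strictly less than $1$ on $(0,1)$, so every point of that open segment converges to $p_{\ast}$ under forward iteration. Because the segment is tangent to the linear stable subspace $W^{s}_{lin}(p_{\ast})$ at $p_{\ast}$ and is forward-invariant, the local stable manifold theorem identifies it with $W^{s}_{loc}(p_{\ast})$, and iterating preimages within the invariant diagonal yields $W^{s}(p_{\ast})=\{(x,x):0<x<1\}$ globally.

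The main obstacle I expect is the uniqueness step in stage one—ensuring there are no spurious off-diagonal fixed points—since the system in $(x,y)$ is only quadratic in each variable but the coupling could in principle admit extra solutions; reducing to the monotone cubic in $u=\sqrt{y}$ is the cleanest way I see to close that gap. Everything else is bookkeeping in closed form using the golden-ratio identity $x_{\ast}=(1-x_{\ast})^{2}$.
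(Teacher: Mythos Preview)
The paper explicitly leaves the proof of this lemma to the reader, so there is no authors' argument to compare against; your plan stands on its own and is essentially correct.

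One small computational slip: when you substitute $u=\sqrt{y}$ (and eliminate $x$ via $x=u/(1+u)$, which follows from $u(1-x)=x$), the second fixed-point equation reduces to $(1-u)^{2}(1+u)=u^{3}$, i.e.\ the quadratic $u^{2}+u-1=0$, not your cubic. This is harmless, because your cubic factors as $u^{3}+2u^{2}-1=(u+1)(u^{2}+u-1)$, so it shares the same unique positive root and your monotonicity argument still forces $u=v$, hence $x=y$. You might also tighten the stable-manifold step slightly: the key extra observation is that $\xi(x,y)=\eta(x,y)$ forces $y(1-x)=x(1-y)$, i.e.\ $x=y$, so the \emph{full} $F$-preimage of the diagonal is the diagonal, which is what makes the global identification $W^{s}(p_{\ast})=\{(x,x):0<x<1\}$ go through (not merely containment).
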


Next, we analyze the unstable manifold of $p_{\ast}$ in some detail. For this
purpose, it is convenient to introduce a change of variables linked to the
symmetry of $F$ expressed in (A2); namely,%
\[
T(x,y)=(u,v):=\left(  \frac{x+y-2x_{\ast}}{\sqrt{2}},\frac{-x+y}{\sqrt{2}%
}\right)  ,
\]
with inverse%
\[
T^{-1}(u,v)=(x,y):=\left(  x_{\ast}+\frac{u-v}{\sqrt{2}},x_{\ast}+\frac
{u+v}{\sqrt{2}}\right)  .
\]
Clearly, $T$ is a translation of the origin to the fixed point $p_{\ast}$
followed by a counterclockwise rotation of $\pi/4$. The defining map for the
dynamics in the new coordinates can be readily computed to be
\begin{equation}
\tilde{F}(u,v):=T\circ F\circ T^{-1}(u,v)=\left(  \sqrt{2}/R(u,v)\right)
\left(  \sqrt{2}(1+2x_{\ast}^{2})u+(1-x_{\ast}^{2})(v^{2}-u^{2}),-\sqrt
{2}v\right)  ,\label{eq7}%
\end{equation}
where%
\[
R(u,v):=2x_{\ast}(2-x_{\ast})+2\sqrt{2}(1-x_{\ast})u+v^{2}-u^{2}.
\]
The properties of the unstable manifold of the fixed point may now be
described as in the following result, which can be proved directly from Lemma
3.1 and (7).

\begin{lemma}
\label{L2} The unstable manifold of the fixed point $p_{\ast}$, which
corresponds to $0=(0,0)$ in the new $uv$-coordinates, has the form
\[
W^{u}\left(  p_{\ast}\right)  =W^{u}\left(  0\right)  =\{(\varphi
(v),v):\left\vert v\right\vert <1/\sqrt{2}\},
\]
where $\varphi$ is a smooth $(C^{\infty})$ function satisfying the following properties:

\begin{itemize}
\item[(i)] $\varphi(0)=\varphi^{\prime}(0)=0$ and $\varphi(v)\uparrow
\frac{-2+\sqrt{5}}{\sqrt{2}}$ as $v\uparrow1/\sqrt{2}$.

\item[(ii)] $\varphi$ is an even function

\item[(iii)] $\varphi$ satisfies the functional equation
\begin{equation}
\varphi(v)=\kappa\left\{  (1-x_{\ast}^{2})\left(  \varphi(v)^{2}-v^{2}\right)
+\frac{S(\varphi(v),v)}{\sqrt{2}}\varphi\left(  \frac{-2v}{S(\varphi
(v),v)}\right)  \right\}  ,\label{e8}%
\end{equation}
where $\kappa:=1/\left[  \sqrt{2}\left(  1+2x_{\ast}^{2}\right)  \right]  $
and
\[
S(\varphi(v),v):=2x_{\ast}(2-x_{\ast})+2\sqrt{2}(1-x_{\ast})\varphi
(v)+v^{2}-\varphi(v)^{2}.
\]

\end{itemize}
\end{lemma}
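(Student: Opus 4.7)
The plan is to apply the analytic stable/unstable manifold theorem, exploit the $\mathbb{Z}_{2}$-symmetry (A2), and use the invariance of $W^{u}(0)$ under $\tilde F$ together with the geometry of $\hat X$. First I would apply the unstable manifold theorem at the hyperbolic fixed point $0$ of $\tilde F$: Lemma~\ref{L1} gives $|\lambda_{s}|<1<|\lambda_{u}|$, and a direct calculation shows that $T$ carries $W^{u}_{\mathrm{lin}}(p_{\ast})=\{(x,-x+2x_{\ast}):x\in\mathbb{R}\}$ to the $v$-axis $\{u=0\}$. The theorem therefore produces a local $C^{\omega}$ (hence $C^{\infty}$) graph $u=\varphi(v)$ on some $(-\delta,\delta)$ tangent to the $v$-axis at $0$, which delivers the smoothness claim and the initial conditions $\varphi(0)=\varphi^{\prime}(0)=0$ in (i).

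For evenness (ii), a direct calculation gives $\tilde R:=T\circ R\circ T^{-1}:(u,v)\mapsto(u,-v)$, and conjugation of (A2) by $T$ yields $\tilde R\circ\tilde F=\tilde F\circ\tilde R$. Since $\tilde R$ fixes $0$ and preserves the linear unstable subspace $\{u=0\}$, uniqueness of the local unstable manifold forces $\tilde R(W^{u}(0))=W^{u}(0)$, which is precisely $\varphi(-v)=\varphi(v)$.

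For the functional equation (iii), invariance of $W^{u}(0)$ under $\tilde F$ gives $\tilde F(\varphi(v),v)\in W^{u}(0)$, so its first coordinate must equal $\varphi$ evaluated at its second coordinate. Formula (7) with $u=\varphi(v)$ yields second coordinate $-2v/S(\varphi(v),v)$, where $S$ is $R$ restricted to the graph; equating first coordinates and solving algebraically for $\varphi(v)$ produces exactly (8).

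Finally, the global extension and endpoint behavior in (i) would be handled by forward iteration of the local graph. Since $|\lambda_{u}|=(1+\sqrt{5})/2>1$, $\tilde F$ expands $v$ near $0$, so $\bigcup_{n\geq 0}\tilde F^{n}(W^{u}_{\mathrm{loc}})$ sweeps out strictly larger $v$-intervals. Under $T$ the triangular domain $\hat X$ becomes the region bounded by the vertical line $u=(\sqrt{5}-2)/\sqrt{2}$ (the image of $x+y=1$) and the two slanted lines $u=\pm v-\sqrt{2}\,x_{\ast}$, with $T(0,1)=((\sqrt{5}-2)/\sqrt{2},\,1/\sqrt{2})$ and $T(1,0)=((\sqrt{5}-2)/\sqrt{2},\,-1/\sqrt{2})$. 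By (D3) and a direct computation from (4), $DF^{2}$ vanishes at both $(0,1)$ and $(1,0)$, so these period-$2$ points are superattracting for $F^{2}$, and orbits on $W^{u}\subset\hat X$ can accumulate only at them. Combined with the bound $u<(\sqrt{5}-2)/\sqrt{2}$ throughout $\hat X$, this identifies the maximal domain of $\varphi$ as $(-1/\sqrt{2},1/\sqrt{2})$ and yields the one-sided limit $\varphi(v)\uparrow(\sqrt{5}-2)/\sqrt{2}$ as $v\uparrow 1/\sqrt{2}$. The main obstacle I anticipate is verifying that the graph representation genuinely persists throughout this extension---that the tangent to $W^{u}$ never rotates to vertical until the endpoints are approached---which I would handle by a cone-field / invariant-slope estimate propagated by $D\tilde F$ along the orbit, using that expansion by $\lambda_{u}$ near $p_{\ast}$ keeps tangents close to the $v$-axis while orbits then fall into the superattracting basins of the two corner points.
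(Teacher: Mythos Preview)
Your proposal is correct and follows essentially the approach the paper has in mind: the paper states only that the lemma ``can be proved directly from Lemma~\ref{L1} and (7),'' and your plan does precisely that---invoking the hyperbolic structure at $p_{\ast}$ from Lemma~\ref{L1} for the local $C^{\omega}$ graph and the tangency $\varphi(0)=\varphi'(0)=0$, the conjugated symmetry $\tilde R:(u,v)\mapsto(u,-v)$ for evenness, and the explicit form (7) of $\tilde F$ together with invariance of $W^{u}(0)$ to derive the functional equation (8). Your treatment of the global extension by forward iteration and the endpoint limit via the superstable 2-cycle also matches the paper's subsequent remark that $W^{u}(p_{\ast})=\lim_{n\to\infty}F^{n}(\Delta')$; the graph-persistence caveat you flag is real but not addressed by the paper either, and your proposed cone-field argument is the natural way to close it.
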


\noindent We note here that (8) can be used to obtain Picard iterate (local)
approximations of the unique solution via the recursive formula (\emph{cf}.
\cite{hartman})
\begin{equation}
\varphi_{n+1}(v)=\kappa\left\{  (1-x_{\ast}^{2})\left(  \varphi_{n}%
(v)^{2}-v^{2}\right)  +\frac{S(\varphi_{n}(v),v)}{\sqrt{2}}\varphi_{n}\left(
\frac{-2v}{S(\varphi_{n}(v),v)}\right)  \right\}  .\label{e9}%
\end{equation}
A good starting point for these iterates is%
\[
\varphi_{1}(v):=\sqrt{2}\left(  \sqrt{5}-2\right)  v^{2},
\]
which satisfies the first two properties of Lemma 3.2, and turns out to be a
fairly good approximation for the unstable manifold. Using (9), we obtain an
even better approximation in the form
\begin{equation}
\varphi_{2}(u)=\kappa\left\{  (1-x_{\ast}^{2})\left(  2(\sqrt{5}%
-2)v^{2}-1\right)  v^{2}+\left(  \sqrt{5}-2\right)  S(\varphi_{1}(v),v)\left(
\frac{-2v}{S(\varphi_{1}(v),v)}\right)  ^{2}\right\}  ,\label{e10}%
\end{equation}
where%
\[
S\left(  \varphi_{1}(v),v\right)  =2\left\{  x_{\ast}\left(  1-x_{\ast
}\right)  +v^{2}\left[  \left(  2(1-x_{\ast})(\sqrt{5}-2)+1\right)  -(\sqrt
{5}-2)^{2}v^{2}\right]  \right\}  .
\]
which is illustrated in Fig. \ref{Fig: Picard}. As a matter of fact, it can be
proved that these iterates actually converge locally to the smooth solution of
(8), but the details, which follow the argument in \cite{hartman}, although
straightforward, are a bit too involved to include here. We remark here that
it is not difficult to prove that a global solution of the unstable manifold
equation can obtained as follows:%
\[
W^{u}\left(  p_{\ast}\right)  =\lim{}_{n\rightarrow\infty}\text{ }F^{n}\left(
\Delta^{\prime}\right)  ,
\]
where $\Delta^{\prime}:=\{\left(  x,1-x\right)  :0\leq x\leq1$.

Let us now investigate periodic orbits of the dynamical system (2). It is easy
to see from the definition that $\zeta:=\{(1,0),(0,1)\}$ is a $2$-cycle in
which each point has (least) period two. As for any other cyclic behavior, we
have the following comprehensive result that follows directly from the
definition of the dynamical system and properties (A1)-(A6) and (D1)-(D3).

\begin{lemma}
\label{L3} The $2$-cycle $\zeta:=\{(1,0),(0,1)\}$ is the only cycle of $F$; it
is superstable and has basin of attraction%

\[
\mathfrak{B}(\zeta):=I^{2}\smallsetminus\{(x,x):x\in\mathbb{R}\}.
\]

\end{lemma}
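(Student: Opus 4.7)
My plan is to address the three assertions in order. That $\zeta$ is a $2$-cycle is immediate from (3): $F(1,0)=(0,1)$ and $F(0,1)=(1,0)$, the denominators evaluating to $1$ in both cases. For superstability, I apply (A5): each of $F'(1,0)$ and $F'(0,1)$ is a rank-one matrix with one row and one column of zeros, and the chain-rule product gives $(F^{2})'(1,0)=F'(0,1)F'(1,0)=0$, and likewise at $(0,1)$. Hence both eigenvalues of $(F^{2})'$ on $\zeta$ vanish, which is superstability.

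The heart of the lemma lies in the uniqueness of $\zeta$ among cycles and the determination of its basin. The key observation is the identity
\begin{equation*}
\xi-\eta \;=\; \frac{y-x}{x+y-xy},
\end{equation*}
obtained by subtracting the two coordinate functions in (3). With $d(x,y):=y-x$ and $D(x,y):=x+y-xy = 1-(1-x)(1-y)$, this reads $d\circ F=-d/D$. Since $D\in(0,1)$ strictly on the open triangle $\hat{X}$, one obtains
\begin{equation*}
|d\circ F^{n}|(x,y) \;=\; \frac{|d(x,y)|}{\prod_{k=0}^{n-1} D(F^{k}(x,y))},
\end{equation*}
a quantity that is strictly monotone increasing along any forward orbit starting off the diagonal. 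Since $|d|\leq 1$ on $I^{2}$, the product is bounded below by $|d(x,y)|$, so $\sum(1-D(F^{k}))<\infty$ and $D(F^{k})\to 1$; moreover, no cyclic orbit off the diagonal can exist unless every factor $D(F^{k})$ in the cycle equals $1$, i.e.\ unless the orbit is contained in $\{x=1\}\cup\{y=1\}$.

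From these two consequences I finish the proof. On the invariant diagonal (D2), $F$ restricts to $g(x)=(1-x)/(2-x)$ on $(0,1)$; a short computation gives $g^{2}(x)=1/(3-x)$, a M\"{o}bius transformation whose only fixed point in $(0,1)$ is $x_{\ast}$, and iterates of a M\"{o}bius transformation share its fixed points, so $g$ has no periodic points in $(0,1)$ beyond $x_{\ast}=p_{\ast}$. Off the diagonal, the confinement just derived together with $F(X)=Y$ from (A3) forces any periodic orbit to lie in $Y\cap(\{x=1\}\cup\{y=1\})=\zeta$; hence $\zeta$ is the unique cycle. For the basin: points on the coordinate axes map into $\zeta$ in one step by (A4); any other off-diagonal point lies (after at most one iterate) in $\hat{X}$, where $D(F^{k})\to 1$ combined with $\xi_{k}+\eta_{k}<1$ pins the accumulation set to $\zeta$; superstability from the first step then supplies a neighborhood of $\zeta$ on which $F^{2}$ is a strong contraction, which the orbit ultimately enters, whence it converges. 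The diagonal itself is $W^{s}(p_{\ast})$ by Lemma~\ref{L1}, so its points converge to $p_{\ast}$ and are correctly excluded from $\mathfrak{B}(\zeta)$.

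Nearly everything is bookkeeping once the identity $d\circ F=-d/D$ and its monotonicity consequence are in hand; that single formula drives both uniqueness and the basin description. The only delicate step I anticipate is the last passage from orbit accumulation at $\zeta$ to genuine convergence, but this is handled cleanly by the superstable neighborhood already produced in the first step and should present no real obstacle.
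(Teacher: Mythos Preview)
Your argument is sound, and in fact far more explicit than what the paper offers: the paper gives no proof of this lemma at all, merely stating that it ``follows directly from the definition of the dynamical system and properties (A1)--(A6) and (D1)--(D3)'' and leaving the verification to the reader. Your key identity $\xi-\eta=(y-x)/(x+y-xy)$, equivalently $d\circ F=-d/D$ with $D=x+y-xy\in(0,1)$ on $\hat X$, is exactly the right engine: it makes $|d|$ strictly increasing along off-diagonal orbits in $\hat X$ (immediately ruling out cycles there), forces $\prod_k D(F^k)$ to stay bounded below by $|d(x,y)|>0$ and hence $D(F^k)\to 1$, and together with $\xi_k+\eta_k<1$ pins every accumulation point to $\zeta$. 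The superstability computation via (A5) and the M\"{o}bius analysis of $g^{2}(x)=1/(3-x)$ on the diagonal are both correct as stated.

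One small bookkeeping correction: points on the closed edges $\{x=1\}$ and $\{y=1\}$ (excluding the corners) are off-diagonal and not on the coordinate axes, yet their $F$-images land on a coordinate axis rather than in $\hat X$ --- for instance $F(1,y)=(0,1-y)$ --- and only reach $\zeta$ at the following step. So the clause ``any other off-diagonal point lies (after at most one iterate) in $\hat X$'' should be relaxed to ``after at most two iterates, any off-diagonal point lies in $\hat X\cup\zeta$''. This is cosmetic and does not affect the logic of your proof.
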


\begin{figure}[ptbh]
\includegraphics[width=.49\textwidth]{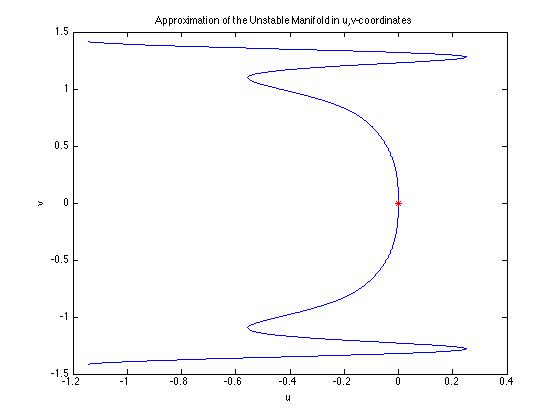}
\includegraphics[width=.49\textwidth]{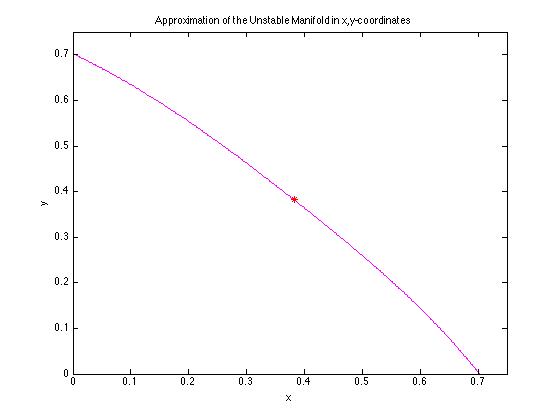}\caption{Picard iteration approximation
($\varphi_{6}$) in $u,v$-coordinates and $x,y$-coordinates respectively.}%
\label{Fig: Picard}%
\end{figure}

\subsection{Summary of the dynamics}

We see from our analysis that the dynamical system (generated by) $F$ defined
by (2)-(3) has highly oscillatory, but quite regular behavior. There is one
fixed point (on the line $y=x$), which attracts everything on the diagonal in
$X$. In addition, there are only two other special points; namely, $(1,0)$ and
$(0,1)$, both of which are superstable periodic points of period two that
comprise the 2-cycle $\zeta:=\{(1,0),(0,1)\}$. Moreover, $\zeta$ attracts
everything in $X$ except the points along the diagonal. Thus we see that our
minimal model nicely replicates the highly oscillatory \textquotedblleft
race\textquotedblright\ behavior found in physical realizations of RSFF
circuits, but none of the chaotic dynamics. In fact, the nonwandering set of
the minimal dynamical system, which actually coincides with the periodic set,
has the very simple form
\begin{equation}
\Omega\left(  F\right)  =Per\left(  F\right)  =\{p_{\ast}\}\cup\zeta
,\label{eq11}%
\end{equation}
and $F$ is $C^{1}$-structurally stable so that the dynamics maintains its
regularity for all perturbations that, along with their derivatives, are
sufficiently small. A proof of the structural stability for our map, which
does not quite satisfy the usual hypotheses, can be fashioned from a
straightforward modification of the methods employed in
\cite{deMelo,palmel,Per,Rob}. Thus, it would seem that perturbations capable
of generating chaos can be very small, but their derivatives need to be quite
large or even fail to exist. We shall verify this in the sequel.

\section{Perturbed Minimal Model with One-dimensional Chaos}

In this section we shall show how small perturbations that do not break the
reflectional symmetry of the minimal model can produce chaos along the
diagonal. More precisely, we shall give a constructive proof of the following result.

\begin{theorem}
\label{T1} There exist continuous arbitrarily small $C^{0}$ perturbations of
the minimal model map $F$ for the RSFF circuit that are symmetric with respect
to reflections in the invariant line $y=x$ and exhibit one-dimensional chaos
in their restrictions to the diagonal.
\end{theorem}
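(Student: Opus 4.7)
\emph{Proof plan.} The plan is to exploit the invariance of the diagonal $\Delta:=\{(x,x):0<x<1\}$ to reduce the problem to a one-dimensional construction, embed a tent map inside $\Delta$ via a small perturbation, and then lift the 1D correction to a $\mathbb{Z}_2$-symmetric two-dimensional perturbation by inserting it into a narrow tubular neighborhood of $\Delta$.

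First, by (D2) the diagonal is $F$-invariant, and a direct computation from (3) shows that the restriction $F|_\Delta$ is conjugate (via $(x,x)\mapsto x$) to the smooth one-dimensional map $g(x):=(1-x)/(2-x)$ on $[0,1]$. Since $g'(x)=-(2-x)^{-2}$ satisfies $|g'|<1$ on $[0,1)$, $g$ is a contraction toward $x_\ast=(3-\sqrt5)/2$ and carries no chaotic dynamics of its own, so any chaos must be produced by the perturbation. Given $\epsilon>0$, pick $\delta\in(0,\epsilon/3)$, set $J:=[x_\ast-\delta,x_\ast+\delta]$, and define the rescaled tent map $T_J(x):=x_\ast+\delta-2|x-x_\ast|$, which maps $J$ onto $J$ and is affinely conjugate to the standard full tent map on $[0,1]$, hence chaotic in the senses of Devaney and Li--Yorke. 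Construct $g_\epsilon:[0,1]\to[0,1]$ by setting $g_\epsilon=T_J$ on $J$, $g_\epsilon=g$ outside a slightly enlarged interval $J'=[x_\ast-\delta',x_\ast+\delta']$ with $\delta<\delta'<2\delta$, and linearly interpolating on $J'\setminus J$. Since both $T_J(x)$ and $g(x)$ lie in an $O(\delta)$-neighborhood of $x_\ast$ throughout $J'$, a routine estimate gives $\|g_\epsilon-g\|_\infty<\epsilon$.

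To produce the 2D perturbation, let $\varphi:\mathbb{R}\to[0,1]$ be a continuous even bump with $\varphi(0)=1$ and $\operatorname{supp}\varphi\subset[-1,1]$, fix a small $\delta_0>0$, and define
\[
\Psi_\epsilon(x,y):=\varphi\!\left(\frac{y-x}{\delta_0}\right)\Bigl[g_\epsilon\!\left(\tfrac{x+y}{2}\right)-g\!\left(\tfrac{x+y}{2}\right)\Bigr]\cdot(1,1),
\]
setting $F_\epsilon:=F+\Psi_\epsilon$. Because $\varphi$ is even and $(x+y)/2$ is symmetric, the scalar factor of $\Psi_\epsilon$ is invariant under interchange of $x$ and $y$, and since both components of $\Psi_\epsilon$ are equal, combining with (A2) yields $R\circ F_\epsilon=F_\epsilon\circ R$. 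Restricted to the diagonal, $\Psi_\epsilon(x,x)=(g_\epsilon(x)-g(x))\cdot(1,1)$, so $F_\epsilon(x,x)=(g_\epsilon(x),g_\epsilon(x))\in\Delta$; thus $\Delta$ remains invariant under $F_\epsilon$ and $F_\epsilon|_\Delta$ coincides with $g_\epsilon$. The bound $\|F_\epsilon-F\|_\infty\le\|g_\epsilon-g\|_\infty<\epsilon$ is immediate from $|\varphi|\le1$, and the chaos of $T_J=g_\epsilon|_J$ provides the claimed one-dimensional chaos on the invariant subset $J\subset\Delta$.

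The main obstacle I anticipate is the bookkeeping needed to ensure $F_\epsilon(I^2)\subset I^2$: the perturbation must not push any point out of the unit square. Because $F(\Delta)$ lies strictly inside the open triangle $\{\xi+\eta<1\}$, choosing $\delta_0$ small so that the tube supporting $\Psi_\epsilon$ maps under $F$ into the interior of $I^2$ suffices; if necessary, one multiplies $\Psi_\epsilon$ by an additional cutoff vanishing near the corners $(1,0)$ and $(0,1)$, which preserves continuity, symmetry, the values of $\Psi_\epsilon$ on $\Delta$ (since $\Delta$ is bounded away from those corners), and $C^0$-smallness. A secondary minor issue is which formal notion of ``one-dimensional chaos'' to fix, but each of the standard notions (Devaney, Li--Yorke, positive topological entropy) transfers from the standard tent map to $g_\epsilon|_J$ via the affine conjugacy and hence to $F_\epsilon|_{J\times\{\mathrm{diag}\}}$.
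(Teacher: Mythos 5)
Your proof is correct and follows the same basic strategy as the paper: use the invariance of the diagonal to reduce to a one-dimensional problem, implant a scaled tent map near the fixed point $x_\ast$ by a $C^0$-small modification of the restriction $f(x)=(1-x)/(2-x)$, and then lift the one-dimensional perturbation to a $\mathbb{Z}_2$-symmetric planar perturbation supported in a thin neighborhood of the diagonal. The two places where your route differs are worth noting. First, the paper does not insert the tent map directly; it inserts the orientation-reversing piecewise-linear map $\psi_\sigma$ of (\ref{e12}) whose \emph{second} iterate is the tent map. This matters to the authors because $f$ is strictly decreasing, and $\psi_\sigma$ preserves that orientation near $x_\ast$, which they exploit to arrange that $f_\sigma$ remains decreasing off a small interval (their property (iii)) and that the resulting planar map $F_\sigma$ has off-diagonal dynamics qualitatively like $F$ (their property (d)) — features not demanded by the statement of Theorem \ref{T1} but used in their surrounding discussion. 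Your tent insert $T_J$ is increasing on half of $J$, so you forfeit these extra qualitative claims, but the theorem as stated only requires chaos on the diagonal, symmetry, and $C^0$-smallness, all of which you establish. Second, where the paper merely asserts the two-dimensional extension exists ``by simply linearly joining the perturbed diagonal map to $F$,'' you give an explicit tubular-neighborhood formula $F_\epsilon = F + \Psi_\epsilon$ with a bump factor $\varphi((y-x)/\delta_0)$, and you correctly verify invariance of $\Delta$, equivariance under $R$, and the $C^0$ bound (up to a harmless $\sqrt{2}$ from $\lVert(1,1)\rVert$, absorbed by shrinking $\delta$). Your closing observation that the support of $\Psi_\epsilon$ is automatically a compact set near $(x_\ast,x_\ast)$ — because $g_\epsilon-g$ vanishes outside $J'$ — disposes of the $F_\epsilon(I^2)\subset I^2$ concern more cleanly than the corner-cutoff you propose as a fallback, which is in fact unnecessary.
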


\begin{proof}
The idea of our argument is to embed an arbitrarily
small $C^{0}$ perturbation into $F$ that implants known chaotic dynamics along
the dynamics without breaking the symmetry. This chaotic insert is defined
along the diagonal for any $\sigma>0$ using the continuous piecewise-linear
function
\begin{equation}
\psi_{\sigma}(x):=\left\{
\begin{array}
[c]{cc}%
-x, & -2\sigma\leq x\leq0\\
-2\left(  \sigma-\left\vert x-\sigma\right\vert \right)  , & 0\leq
x\leq2\sigma
\end{array}
\right.  ,\label{e12}%
\end{equation}
which is illustrated in Fig. \ref{Fig: Kink}. Observe that the composition $\tau_{\sigma
}:=\psi_{\sigma}^{2}:[0,2\sigma]\rightarrow\lbrack0,2\sigma]\ $is just the
$2\sigma$-scaled tent map given by%
\[
\tau_{\sigma}(x)=2\left(  \sigma-\left\vert x-\sigma\right\vert \right)  ,
\]
which is known to have chaotic dynamics, including periodic orbits of all
periods, a dense orbit and a Lyapunov exponent of $\log2$ for almost all
initial points (see, e.g. \cite{kathas}).
Now we note from the definitions (2) and (3), (A2) and Lemma 3.1 that the
restriction $f$ of $F$ to the diagonal is%
\begin{equation}
f(x)=\frac{1-x}{2-x},\label{e13}%
\end{equation}
which has the unique (stable) fixed point $x_{\ast}=(1/2)(3-\sqrt{5})$, and
this is a global attractor on the unit interval $[0,1]$. It is easy to see
that for every $\epsilon>0$, we can choose $0<8\sigma<\min\{1/6,\epsilon\}$
such that there is a continuous map $f_{\sigma}:[0,1]\rightarrow\lbrack0,1/2]
$ with the following properties: (i) $f_{\sigma}(x)=$ $\psi_{\sigma}%
(x-x_{\ast})$ for all $x\in J_{\sigma}:=[x_{\ast}-2\sigma,x_{\ast}+2\sigma]$;
(ii) $f_{\sigma}(x)=f(x)$ for all $x\in\lbrack0,1]\smallsetminus\lbrack
x_{\ast}-8\sigma,x_{\ast}+4\sigma]$; (iii) $f_{\sigma}$ is strictly decreasing
on $[0,x_{\ast}+\sigma]\cup\lbrack x_{\ast}+2\sigma,1]$; and (iv) $\left\vert
f_{\sigma}(x)-f(x)\right\vert <4\sigma$ for all $0\leq x\leq1$. We observe
that $J_{\sigma}$ is a global attractor for $f_{\sigma}$, but not strange
since it has dimension equal one.
$\ $The one-dimensional chaotic implant can be extended to the whole
two-dimensional domain of $F$. More precisely, it is easy to see (by simply
linearly joining the perturbed diagonal map to $F$) that there exists for any
$0<8\sigma<\min\{1/6,\epsilon\}$ a continuous map $F_{\sigma}:X\rightarrow X$
satisfying the following properties: (a) the diagonal is $F_{\sigma}$-
invariant (b) $F_{\sigma}$ restricted to the diagonal is $f_{\sigma}$; (c)
$F_{\sigma}$ is symmetric with respect to the diagonal; (d) the dynamics of
$F_{\sigma}$ of the diagonal are qualitatively the same as that of $F$; and
(e) $\left\vert F_{\sigma}(x,y)-F(x,y)\right\vert <8\sigma$ for all $(x,y)\in
X$. As $\sigma$ can be made arbitrarily small, the proof is complete.
\end{proof}

\noindent Observe that the perturbation $F_{\sigma}$ constructed in the above
proof is merely continuous and piecewise smooth inside a small ball centered
at the fixed point $x_{\ast}$ and smooth (i.e.$C^{\infty}$) outside of the
ball. The perturbation and its chaotic dynamics are illustrated in Fig.
\ref{Fig: Chaos1D}. By smoothing the corners of the construction, the
perturbation can be made smooth, which leads directly to the following result.

\begin{corollary}
The perturbation in Theorem \ref{T1} can be chosen so that it is a $C^{\infty
}$ function having the same qualitative dynamics as the function constructed above.
\end{corollary}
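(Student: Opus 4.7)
The plan is to smooth the piecewise-linear perturbation $F_\sigma$ of Theorem \ref{T1} by replacing its corners with $C^\infty$ arcs, while preserving the topological conjugacy with the tent map $\tau_\sigma$ that was used to produce the chaos. The construction splits naturally into smoothing the one-dimensional kink $\psi_\sigma$ along the diagonal, verifying that the smoothed iterate is still chaotic, and then re-doing the two-dimensional extension with smooth cutoffs.

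For the first step, I would fix a standard $C^\infty$ mollifier $\rho_\delta$ supported in $(-\delta,\delta)$ with $\delta \ll \sigma$, and define $\widetilde\psi_\sigma := \psi_\sigma * \rho_\delta$ in narrow neighborhoods of each corner of $\psi_\sigma$ (at $-2\sigma,0,\sigma,2\sigma$), patched together using a smooth partition of unity so that $\widetilde\psi_\sigma$ agrees with $\psi_\sigma$ outside those neighborhoods. The resulting $C^\infty$ function satisfies $\|\widetilde\psi_\sigma - \psi_\sigma\|_\infty = O(\delta)$. A small compensating $C^\infty$ bump centered at the interior minimum $x=\sigma$ can be added to enforce $\widetilde\psi_\sigma(\sigma) = -2\sigma$ exactly, so that the iterate $\widetilde\tau_\sigma := \widetilde\psi_\sigma \circ \widetilde\psi_\sigma$ is a smooth full unimodal self-map of $[0,2\sigma]$ --- $C^\infty$, vanishing at both endpoints, with a single nondegenerate interior maximum at $\sigma$ of critical value $2\sigma$, and negative Schwarzian derivative. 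By the Milnor--Thurston kneading theorem (see e.g.\ \cite{kathas}), a smooth full unimodal map with negative Schwarzian and kneading sequence equal to that of $\tau_\sigma$ is topologically conjugate to $\tau_\sigma$, so $\widetilde\tau_\sigma$ inherits periodic orbits of every period, a dense orbit, and topological entropy $\log 2$.

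For the two-dimensional extension, I would replace the piecewise-linear joining employed in the proof of Theorem \ref{T1} by a smooth one, using a $C^\infty$ cutoff $\chi(u) = \exp\!\left(-1/(r_0^2 - u^2)\right)$ in the direction transverse to the diagonal to blend between the smoothed diagonal map and $F$ outside a narrow tube. Reflectional symmetry across $y=x$ is enforced explicitly by averaging with $R \circ \widetilde F_\sigma \circ R$, which preserves diagonal-invariance. The resulting $\widetilde F_\sigma : X \to X$ is globally $C^\infty$, $R$-symmetric, diagonal-invariant, and $C^0$-close to $F$ by a prescribable amount, with restriction to the diagonal topologically conjugate to $f_\sigma$ and hence chaotic in the sense of the proof of Theorem \ref{T1}.

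The main technical obstacle is the middle step: a naive mollification of $\psi_\sigma$ can shift the critical value of $\widetilde\tau_\sigma$ strictly below $2\sigma$, truncating the invariant interval and possibly lowering the topological entropy below $\log 2$, which in principle could destroy the required chaotic behavior or change the kneading invariant. I plan to handle this either by the explicit compensating bump sketched above, which pins the critical value exactly at the endpoint of the invariant interval, or, alternatively, by a density argument: smooth full unimodal maps with negative Schwarzian derivative and the maximal (tent) kneading invariant form a $C^0$-dense subset of a neighborhood of $\psi_\sigma$, so $\widetilde\psi_\sigma$ can be chosen arbitrarily $C^0$-close to $\psi_\sigma$ with the desired conjugacy property.
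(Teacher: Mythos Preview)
The paper offers no detailed proof of this corollary --- it is asserted as an immediate consequence of the sentence ``By smoothing the corners of the construction, the perturbation can be made smooth,'' with no further argument. Your proposal is therefore far more thorough than the paper's own treatment, and the overall strategy (mollify $\psi_\sigma$, verify that the squared map remains chaotic on $[0,2\sigma]$, then extend smoothly and symmetrically off the diagonal) is sound and carries out in earnest what the paper leaves implicit.

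Two remarks are worth making. First, your appeal to negative Schwarzian derivative is not well grounded: a mollification of a piecewise-linear map has $Sf=0$ on the linear pieces and no particular sign in the rounded corners, so $Sf<0$ globally is not something you get for free. What you actually need, in order to upgrade the Milnor--Thurston semiconjugacy with the full tent map to a genuine conjugacy, is the absence of wandering intervals; for a $C^2$ unimodal map with non-flat critical point this follows from the theorem of de Melo--van Strien (or Martens--de Melo--van Strien), and that is the correct citation here. Second, the entire mollification-plus-kneading apparatus can be bypassed: the quadratic map $x\mapsto 4x(1-x)$ on $[0,1]$ is real-analytic and topologically conjugate to the full tent map via $h(x)=\sin^2(\pi x/2)$, so one may simply replace the piecewise-linear insert $\psi_\sigma$ in the proof of Theorem~\ref{T1} by an affinely rescaled logistic (or Chebyshev) map from the outset. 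This yields a $C^\infty$ perturbation directly, with the chaotic dynamics guaranteed by an explicit conjugacy, and sidesteps the critical-value adjustment and the density argument you flag as the main obstacle. Either route works; the logistic substitution is the shorter one and is likely closer in spirit to what the paper's one-line remark intends.
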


It is useful to note that the perturbation $F_{\sigma}$ producing
one-dimensional chaos along the invariant diagonal is arbitrarily $C^{0}$
close to the original map $F$. However, it cannot be made arbitrarily $C^{1}%
$close to $F$ by virtue of the readily verified fact that the restriction to
the diagonal is $C^{1}$-structurally stable, so we cannot find arbitrarily
small perturbations in the $C^{1}$ sense that posses chaotic dynamics. As we
shall see in the next section, there exist arbitrarily small $C^{1}$
perturbations of $F$ that are $C^{\infty}$ and exhibit (two-dimensional)
chaotic dynamical regimes.

\section{Two-dimensional Chaos Induced by Perturbation}

In this section we shall show how arbitrarily small perturbations of the
minimal model can induce chaotic dynamics of various types that are more
substantial and complex than that described in Theorem 4.1. Our first
perturbation involves the (geometric) local embedding of a Smale horseshoe in
the minimal model in a neighborhood of the saddle point $p_{\ast}$.

\subsection{Direct horseshoe chaos}

We begin by defining a map in a neighborhood of $p_{\ast}$ that yields the
desired horseshoe. First, let $0<\delta_{0}\leq0.04$ so that
\[
\bar{B}_{6\sqrt{2}\delta_{0}}(p_{\ast}):=\{(x,y)\in\mathbb{R}^{2}:(x-x_{\ast
})^{2}+(y-y_{\ast})^{2}\leq72\delta_{0}^{2}\}\subset\lbrack x_{\ast}-6\sqrt
{2}\delta_{0},x_{\ast}+6\sqrt{2}\delta_{0}]^{2}\subset(0,1)^{2}=\mathrm{int}%
\left(  I^{2}\right)  ,
\]
where $\mathrm{int}(E)$ denotes the interior of the subset $E$ of the plane.
Now, for each $0<\delta\leq\delta_{0}$,we introduce a $(s,t)$-coordinate
system with origin at $p_{\ast}$, the $s$-axis pointing to the right along the
linear unstable manifold of $p_{\ast}$ and the $t$-axis pointing upward along
the stable manifold of $p_{\ast}$. Let $P_{\delta}:\{(s,t):\left\vert
s\right\vert ,\left\vert t\right\vert \leq6\delta\}\rightarrow\mathbb{R}^{2}$
be defined in terms of $s,t$-coordinates as%
\begin{equation}
P_{\delta}\left(  s,t\right)  :=\left(  \phi_{\delta}(s),\psi_{\delta
}(s,t)\right)  ,\label{e14}%
\end{equation}
where $\phi_{\delta}$ is an odd function such that
\[
\phi_{\delta}(s):=\left\{
\begin{array}
[c]{cc}%
-(3/2)s, & 0\leq s\leq2\delta\\
3(s-3\delta), & 2\delta\leq s\leq4\delta\\
(1/3)\left(  s+5\delta\right)  , & 4\delta\leq s\leq6\delta
\end{array}
\right.  ,
\]
and $\psi_{\delta}$ is an odd function of $t$ for each $s$ given by
\[
\psi_{\delta}(s,t):=-\delta t-\mu(s),
\]
where
\[
\mu(s):=\left\{
\begin{array}
[c]{cc}%
0, & \left\vert s\right\vert \leq2\delta\\
\mathrm{sgn}(s)\left(  \left\vert s\right\vert -\delta\right)  , & \delta\leq
s\leq2\delta\\
\mathrm{sgn}(s)\delta, & \left\vert s\right\vert \geq2\delta
\end{array}
\right.  .
\]

As $p_{\ast}$ is a fixed point of both $P_{\delta}$ and the minimal model $F$,
for and given $\epsilon>0$ we can choose $0<\delta=\delta(\epsilon)\leq
\delta_{0}$ such that in terms of the euclidean norm, we have%
\[
\left\Vert F(p)-P_{\delta}(p)\right\Vert <\epsilon
\]
for all $p\in$ $\bar{B}_{6\sqrt{2}\delta_{0}}(p_{\ast})$. Moreover, it is easy
to see - as shown in Fig. \ref{figA} - that the image of the square
\[
Q_{\delta}:=\{(s,t):-2\delta\leq s,t\leq2\delta\}
\]
under the map (\ref{e12}), namely $P_{\delta}\left(  Q_{\delta}\right)  $, is
a (double) horseshoe. Note that all of the above functions are continuous and
piecewise linear, which means they have smooth approximations (obtained by
smoothly rounding out the corners) that are arbitrarily $C^{0}$-close to them.
Consequently, we can and will assume that the perturbations chosen here to
prove our next result and in the sequel are smooth.

\begin{theorem}
\label{T2} For every $\epsilon>0$ there exists a $0<\delta=\delta
(\epsilon)\leq0.04$ such that the $($smooth$)$ perturbation of the minimal
model map $F$ defined as%
\begin{equation}
F_{\delta}(p):=\left(  1-\rho(r)\right)  P_{\delta}(p)+\rho(r)F(p),\label{e15}%
\end{equation}
where $P_{\delta}$ is as in $($\ref{e14}$)$, $r:=\left\Vert p-p_{\ast
}\right\Vert $ and
\[
\rho(r):=\left\{
\begin{array}
[c]{cc}%
0, & 0\leq r\leq4\delta\sqrt{2}\\
\left(  1/\delta\sqrt{2}\right)  \left(  r-4\delta\sqrt{2}\right)  , &
4\delta\sqrt{2}\leq r\leq5\delta\sqrt{2}\\
1, & r\geq5\delta\sqrt{2}%
\end{array}
\right.  ,
\]
satisfies $\left\Vert F_{\delta}(p)-F(p)\right\Vert <\epsilon$ for all $p\in
I^{2}$ and $F_{\delta}\left(  Q_{\delta}\right)  $ is a double horseshoe as
shown in Fig. \ref{figA}.
\end{theorem}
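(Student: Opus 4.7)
The plan is to prove both claims of Theorem~\ref{T2} by exploiting the three-region structure that the cutoff $\rho$ imposes on $F_\delta$ and reducing everything to results already in hand. For the uniform bound $\|F_\delta-F\|<\epsilon$, I would partition $I^2$ according to $r=\|p-p_\ast\|$: on the inner ball $r\le 4\sqrt{2}\delta$ we have $\rho(r)=0$ and hence $F_\delta=P_\delta$; on the outer region $r\ge 5\sqrt{2}\delta$ we have $\rho(r)=1$ and hence $F_\delta=F$; on the transition annulus $F_\delta=(1-\rho)P_\delta+\rho F$ is a convex combination. In the outer region $\|F_\delta-F\|=0$ trivially, while in the inner ball and annulus the convex-combination identity yields $\|F_\delta(p)-F(p)\|=(1-\rho(r))\|P_\delta(p)-F(p)\|\le\|P_\delta(p)-F(p)\|$. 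Since the entire modified region $\bar{B}_{5\sqrt{2}\delta}(p_\ast)$ sits inside $\bar{B}_{6\sqrt{2}\delta_0}(p_\ast)$, I would invoke the pre-theorem assertion that $\delta=\delta(\epsilon)\le\delta_0$ can be chosen so that $\|P_\delta-F\|<\epsilon$ throughout that neighborhood, giving the global bound on $I^2$.

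For the horseshoe claim, I would observe that the square $Q_\delta=\{(s,t):|s|,|t|\le 2\delta\}$ is contained in the Euclidean ball $\bar{B}_{2\sqrt{2}\delta}(p_\ast)$, which lies strictly inside the inner ball $\bar{B}_{4\sqrt{2}\delta}(p_\ast)$ where $\rho\equiv 0$. Hence $F_\delta|_{Q_\delta}=P_\delta|_{Q_\delta}$, so the image $F_\delta(Q_\delta)=P_\delta(Q_\delta)$ is exactly the double horseshoe already produced by tracing the piecewise-linear formulas for $\phi_\delta$ (a tripled, tent-like expansion in $s$) and $\psi_\delta$ (the contraction $-\delta t$ in $t$, shifted horizontally by the kick $\mu(s)$), as illustrated in Fig.~\ref{figA}. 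To obtain the promised smooth perturbation, I would then round the corners of $\rho$, $\phi_\delta$ and $\mu$ on intervals of arbitrarily small length: because the geometric horseshoe configuration is open in the $C^0$ topology and the bound $\|F_\delta-F\|<\epsilon$ is strict, such smoothings preserve both properties simultaneously, establishing the corollary-style $C^\infty$ upgrade.

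The main delicate point --- and the reason for the separation between the cutoff radii $4\sqrt{2}\delta$ and $5\sqrt{2}\delta$ --- is the bookkeeping that confirms the whole of $Q_\delta$ sits in the region where $\rho\equiv 0$, so the convex blending with $F$ does not perturb the horseshoe crossings on $Q_\delta$ itself; the comfortable factor-of-two gap between $2\sqrt{2}\delta$ (diameter of $Q_\delta$ from $p_\ast$) and $4\sqrt{2}\delta$ (inner radius of the cutoff) makes this immediate. Once that inclusion is noted and the preceding $C^0$ approximation estimate is applied, no further hard analysis is required: the argument is essentially a geometric arrangement, one continuity estimate, and a standard corner-smoothing.
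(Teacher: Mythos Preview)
Your proposal is correct and follows essentially the same approach as the paper's own proof: choose $\delta$ small enough that $\|F-P_\delta\|<\epsilon$ on the relevant ball, observe that the convex combination \eqref{e15} together with the definition of $\rho$ forces $\|F_\delta-F\|<\epsilon$ everywhere, and note that $F_\delta=P_\delta$ on $Q_\delta$ so that $F_\delta(Q_\delta)=P_\delta(Q_\delta)$ is the double horseshoe of Fig.~\ref{figA}. You have simply spelled out the three-region partition, the convex-combination identity $\|F_\delta-F\|=(1-\rho)\|P_\delta-F\|$, and the inclusion $Q_\delta\subset\bar B_{2\sqrt{2}\delta}(p_\ast)\subset\bar B_{4\sqrt{2}\delta}(p_\ast)$ more explicitly than the paper does, and added a remark on the corner-smoothing; none of this departs from the paper's argument.
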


\begin{proof}
As noted above, the restriction  $0<\delta\leq0.04$ guarantees that $\bar
{B}_{6\sqrt{2}\delta}(p_{\ast})$ is contained in the interior of $I^{2}$ and
that by taking $\delta$ sufficiently small, we can further insure that
$\left\Vert F(p)-P_{\delta}(p)\right\Vert <\epsilon$ on $\bar{B}_{6\sqrt
{2}\delta}(p_{\ast}).$ Hence, it follows from the definition of $\rho$ and (\ref{e15}) that
$\left\Vert F_{\delta}(p)-F(p)\right\Vert <\epsilon$ on $I^{2}$. Finally,
(\ref{e15}) implies that $F_{\delta}\left(  Q_{\delta}\right)  =$ $P_{\delta
}\left(  Q_{\delta}\right)  $, which is the double horseshoe illustrated in
Fig.\ref{figA}, and this completes the proof.
\end{proof}

In light of Theorem \ref{T2}, the next result on the existence of horseshoe
type chaos follows directly from the results of Birkhoff, Moser and Smale
(\emph{cf.} \cite{guho,kathas,Moser2,palmel,Rob,Smale,wigbook}).

\begin{corollary}
The perturbation $F_{\delta}$ in Theorem \ref{T2} is chaotic on an invariant
subset contained in the double horseshoe image described therein. In
particular, $F_{\delta}$ restricted to this invariant set is conjugate to the
shift map on three symbols.
\end{corollary}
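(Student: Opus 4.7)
The plan is to reduce the claim to the classical Smale--Birkhoff (or Conley--Moser) horseshoe theorem in its $N=3$ symbol version, as stated in \cite{Moser2,wigbook,guho}. Theorem \ref{T2} has already done the essential geometric work: on $Q_\delta$ we have $F_\delta\equiv P_\delta$ (since $\rho\equiv 0$ there), and $P_\delta(Q_\delta)$ is a ``double'' horseshoe whose three folds correspond to the three linear branches of $\phi_\delta$. Thus the image crosses $Q_\delta$ in three disjoint ``horizontal'' strips $H_0,H_1,H_2$, each running fully from one vertical side of $Q_\delta$ to the other, and the three preimages $V_i:=F_\delta^{-1}(H_i)\cap Q_\delta$ are three disjoint ``vertical'' strips running fully from top to bottom.

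The steps I would carry out are as follows. First, I would form the maximal $F_\delta$-invariant subset of $Q_\delta$,
\[
\Lambda:=\bigcap_{n\in\mathbb{Z}} F_\delta^{n}(Q_\delta),
\]
using the fact that on a neighborhood of each $V_i$ the map $P_\delta$ is an invertible affine branch. Second, I would verify the standard stable/unstable cone conditions. From (\ref{e14}) one reads off $\partial_s\phi_\delta\in\{-3/2,\,3,\,1/3\}$ and $\partial_t\psi_\delta=-\delta$, with $\partial_t\phi_\delta\equiv 0$ and $\partial_s\psi_\delta$ supported on the thin transition layers for $\mu$; hence for $\delta$ sufficiently small a horizontal cone field is strictly expanded and a vertical cone field is strictly contracted on each branch. (After rounding the corners to make $F_\delta$ smooth, the same estimates hold outside arbitrarily thin strips, which can be excluded from $\Lambda$.) Third, I would define the itinerary map
\[
\Sigma:\Lambda\longrightarrow\{0,1,2\}^{\mathbb{Z}},\qquad \Sigma(p)_n:=i\ \Longleftrightarrow\ F_\delta^{n}(p)\in V_i,
\]
and use the cone conditions together with full-width crossing to show, in the classical way, that nested intersections of horizontal and vertical strips shrink to single points (injectivity), every admissible sequence is realized (surjectivity), and $\Sigma$ intertwines $F_\delta|_\Lambda$ with the left shift $\mathsf{s}$ on three symbols. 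Continuity of $\Sigma$ and $\Sigma^{-1}$ then follows from the exponential shrinkage of these strips, yielding the conjugacy.

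The main obstacle, and essentially the only thing beyond direct citation of the horseshoe theorem, is the geometric book-keeping in Step~2: one must check that all three branches of $P_\delta$ genuinely produce strips that fully cross $Q_\delta$ transversally. The contracting factor $\delta$ in $\psi_\delta$ combined with the bounded vertical displacement $|\mu(s)|\le \delta$ keeps $P_\delta(Q_\delta)$ inside the vertical range of $Q_\delta$, while the slopes $-3/2,\,3,\,1/3$ together with the range $[-6\delta,6\delta]$ of $\phi_\delta$ on $[-2\delta,2\delta]$ after unfolding guarantee that the horizontal extents exit and re-enter $Q_\delta$ as depicted in Fig.~\ref{figA}. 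Once this quantitative verification is in place, Birkhoff--Smale--Moser applies verbatim and gives both the existence of chaos on $\Lambda$ and the topological conjugacy with the full shift on three symbols asserted in the corollary.
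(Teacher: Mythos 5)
Your overall strategy is the same as the paper's: the paper simply cites the Birkhoff--Smale--Moser horseshoe machinery, and you are expanding that citation into the standard cone-field/itinerary proof, which is the right move. However, several of the quantitative assertions you make in Step~2 and in your final paragraph do not survive contact with the actual formulas. First, you assert that the horizontal cone field is strictly expanded ``on each branch'' because $\partial_s\phi_\delta\in\{-3/2,3,1/3\}$; but $|1/3|<1$ is a \emph{contraction}, so the outer branch cannot supply one of the three full-width expanding crossings — you would need to argue that the slope-$1/3$ legs play no role in the three strips (they serve only to fold the image back, not as crossing branches), rather than quote them as part of the cone estimate. Second, you write that $\phi_\delta$ has range $[-6\delta,6\delta]$ on $[-2\delta,2\delta]$ ``after unfolding''; in fact $\phi_\delta([-2\delta,2\delta])=[-3\delta,3\delta]$, and it is unclear what ``unfolding'' means here.

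There is also a more basic gap that your bookkeeping does not address. On $Q_\delta=\{|s|,|t|\le 2\delta\}$ as literally defined in Theorem~\ref{T2}, the map $\phi_\delta$ consists of the \emph{single} linear piece of slope $-3/2$ (the folds of the odd function occur at $|s|=2\delta$, i.e.\ precisely at the boundary of $Q_\delta$ and beyond), so $P_\delta|_{Q_\delta}$ is not itself folded and cannot produce three disjoint vertical preimage strips inside $Q_\delta$. The three monotone branches that you invoke (slopes $3$, $-3/2$, $3$, covering the image $[-3\delta,3\delta]$) live on the larger interval $[-4\delta,4\delta]$, which protrudes outside $Q_\delta$. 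Any honest verification of the Conley--Moser conditions therefore has to specify carefully which box plays the role of the horseshoe square and show that the three vertical strips lie inside it while the three image strips cross it fully; your proposal assumes this works out ``as depicted in Fig.~\ref{figA}'' without doing the arithmetic, and the numbers you do quote suggest you have not actually checked it. Until that is pinned down, invoking Birkhoff--Smale--Moser ``verbatim'' is not yet justified.
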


\begin{center}
\begin{figure}[th]
\centering
\includegraphics[width=4in]{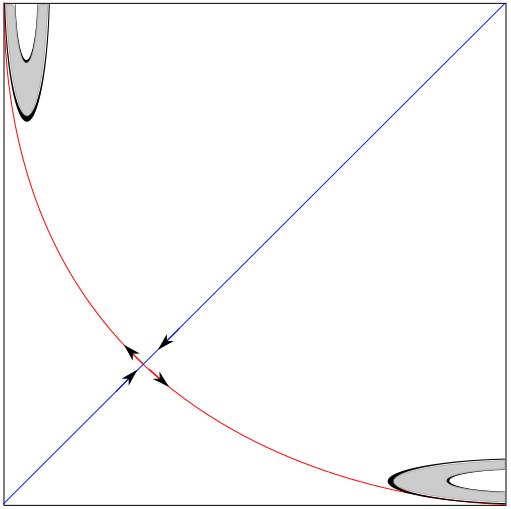}\caption{Embedded double horseshoe}%
\label{figA}%
\end{figure}
\end{center}

\subsection{Snap-back repeller chaos}

There also are arbitrarily small $C^{0}$ perturbations of $F$ exhibiting
snap-back repeller chaos of the type described, for example, by Marotto
\cite{Mar, Mar1}. We start by showing how a small $C^{0}$ perturbation of $F$
can turn $p_{\ast}$ into a source with at least four snap-back points and then
show that it is possible to create infinitely many snap-back points circling
the fixed point. Once again we employ the $s,t$-coordinate system and
$\delta_{0}$ used in the preceding subsection to define the perturbation in a
neighborhood of $p_{\ast}.$ In particular, we define $R_{\delta}%
:\{(s,t):\left\vert s\right\vert ,\left\vert t\right\vert \leq6\delta
\}\rightarrow\mathbb{R}^{2}$ as
\begin{equation}
R_{\delta}\left(  s,t\right)  :=\left(  \phi_{\delta}(s),\psi_{\delta
}(t)\right)  ,\label{e16}%
\end{equation}
where $\phi_{\delta}$ is an odd function of $s$ defined for $s\geq0$ as%
\[
\phi_{\delta}(s):=-2s,
\]
and $\psi_{\delta}$ is an odd function of $t$ defined for $t\geq0$ by
\[
\psi_{\delta}(t):=\left\{
\begin{array}
[c]{cc}%
-2t, & 0\leq t\leq\delta\\
2(t-2\delta), & \delta\leq t\leq3\delta\\
5\delta-t, & 3\delta\leq t\leq6\delta
\end{array}
\right.  ,
\]
and graphed in Fig. \ref{figB}.

As $p_{\ast}$ is a fixed point of both $R_{\delta}$ and the minimal model $F$,
for and given $\epsilon>0$ we can choose $0<\delta=\delta(\epsilon)\leq
\delta_{0}$ such that in terms of the euclidean norm, we have%
\[
\left\Vert F(p)-R_{\delta}(p)\right\Vert <\epsilon
\]
for all $p\in$ $\bar{B}_{6\sqrt{2}\delta_{0}}(p_{\ast})$. \ As above, we are
going to assume with no loss of generality, that our perturbations are
actually smooth, and this leads to our next result, which is an analog of
Theorem \ref{T2}.

\begin{theorem}
\label{T3} For every $\epsilon>0$ there exists a $0<\delta=\delta
(\epsilon)\leq0.04$ such that the $($smooth$)$ perturbation of the minimal
model map $F$ defined as%
\begin{equation}
F_{\delta}(p):=\left(  1-\sigma(r)\right)  R_{\delta}(p)+\sigma
(r)F(p),\label{e17}%
\end{equation}
where $R_{\delta}$ is as in $($\ref{e16}$)$, $r:=\left\Vert p-p_{\ast
}\right\Vert $ and
\[
\sigma(r):=\left\{
\begin{array}
[c]{cc}%
0, & 0\leq r\leq5\delta\sqrt{2}\\
\left(  1/\delta\sqrt{2}\right)  \left(  r-5\delta\sqrt{2}\right)  , &
5\delta\sqrt{2}\leq r\leq6\delta\sqrt{2}\\
1, & r\geq6\delta\sqrt{2}%
\end{array}
\right.  ,
\]
satisfies $\left\Vert F_{\delta}(p)-F(p)\right\Vert <\epsilon$ for all $p\in
I^{2}$ and $p_{\ast}$ is a snap-back repeller for $F_{\delta}$ having the
chaotic dynamics described in \cite{Mar}.
\end{theorem}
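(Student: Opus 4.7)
The strategy mirrors the proof of Theorem \ref{T2}. First I would establish the $C^{0}$-closeness $\|F_{\delta}-F\|<\epsilon$ on $I^{2}$, and then verify that $p_{\ast}$ satisfies Marotto's two hypotheses for a snap-back repeller under $F_{\delta}$. The closeness step is essentially identical to that in Theorem \ref{T2}: since $p_{\ast}$ is a common fixed point of $F$ and $R_{\delta}$, and $F$ is uniformly continuous on $\bar{B}_{6\sqrt{2}\delta_{0}}(p_{\ast})$, one chooses $\delta\leq\delta_{0}=0.04$ so small that $\|F(p)-R_{\delta}(p)\|<\epsilon$ on this ball. By the definition of $\sigma(r)$, $F_{\delta}=R_{\delta}$ on $\bar{B}_{5\delta\sqrt{2}}(p_{\ast})$, $F_{\delta}=F$ off $B_{6\delta\sqrt{2}}(p_{\ast})$, and $F_{\delta}$ is a convex combination of the two on the intermediate annulus, so in every case $\|F_{\delta}(p)-F(p)\|\leq\|R_{\delta}(p)-F(p)\|<\epsilon$.

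For Marotto's first hypothesis (expanding neighborhood of $p_{\ast}$), I would use the fact that in the $(s,t)$-coordinates centered at $p_{\ast}$, $F_{\delta}$ agrees with $R_{\delta}$ on $\bar{B}_{5\delta\sqrt{2}}(p_{\ast})$ and on the inner square $\{|s|,|t|\leq\delta\}$ it acts as the linear expansion $(s,t)\mapsto(-2s,-2t)$. Hence $DF_{\delta}(p_{\ast})=-2I$, both eigenvalues have modulus $2>1$, and by continuity there is a closed ball $\bar{B}_{r}(p_{\ast})$ with $0<r<\delta$ on which $F_{\delta}$ is $C^{1}$-expanding, $r$ being chosen small enough that $\bar{B}_{r}(p_{\ast})$ misses the narrow derivative-transition strips introduced by smoothing the corners of $\psi_{\delta}$ at $t=\pm\delta$.

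For the second hypothesis (existence of a nondegenerate snap-back orbit), I would exhibit a point $q=(0,t_{0})\in B_{r}(p_{\ast})\setminus\{p_{\ast}\}$ and an integer $M\geq1$ with $F_{\delta}^{M}(q)=p_{\ast}$ and $\det DF_{\delta}^{M}(q)\neq0$. The piecewise structure of $\psi_{\delta}$ furnishes the candidate: $\psi_{\delta}$ has interior zeros at $t=\pm2\delta$ with $\psi_{\delta}^{\prime}(\pm2\delta)=\pm2$, and the iterated preimage tree of $0$ under $\psi_{\delta}$ populates a dense subset of $[-6\delta,6\delta]$. Selecting $t_{0}$ so that the orbit $t_{0},\psi_{\delta}(t_{0}),\ldots,\psi_{\delta}^{M-1}(t_{0})$ stays at interior (non-corner) points of the pieces of $\psi_{\delta}$ yields, via the product formula $\det DR_{\delta}^{M}(q)=\prod_{j=0}^{M-1}(-2)\,\psi_{\delta}^{\prime}(\pi_{t}F_{\delta}^{j}(q))$, a nonzero Jacobian since each factor is $\pm2$ or $-1$. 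With both Marotto hypotheses verified, his theorem (\cite{Mar}) then concludes that $p_{\ast}$ is a snap-back repeller for $F_{\delta}$ and exhibits the chaotic dynamics described there.

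The main technical obstacle is the careful placement of the snap-back orbit so that its iterates avoid the corner neighborhoods of $\psi_{\delta}$ at $t=\pm\delta,\pm3\delta$, where smoothing forces the derivative to pass through zero and could annihilate the Jacobian product. Since the only preimages of $\pm2\delta$ under $\psi_{\delta}$ lie at these corners, one cannot simply inherit a snap-back orbit from the piecewise-linear $R_{\delta}$; instead one chooses the smoothing neighborhoods (and their widths) so that the zero-derivative locus in the smoothed $\psi_{\delta}$ does not intersect the orbit, and then invokes a one-parameter implicit-function argument to perturb the approximate piecewise-linear snap-back orbit into an exact snap-back orbit of the smoothed $F_{\delta}$. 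This adjustment is possible because the corners are isolated, the orbit visits only finitely many points, and the expansion of $F_{\delta}$ near $p_{\ast}$ controls the dependence of the terminal iterate on $t_{0}$.
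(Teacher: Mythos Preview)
Your proposal is correct and follows the same strategy as the paper: establish the $C^{0}$-closeness via the convex-combination structure of $F_{\delta}$, observe that $F_{\delta}=R_{\delta}$ on the inner ball so $p_{\ast}$ becomes a hyperbolic repeller, and then invoke Marotto's theorem once a snap-back orbit is exhibited. The paper's own proof is considerably terser---it simply names the points $(0,\pm2\delta)$ and $(0,\pm5\delta)$ (the nontrivial zeros of $\psi_{\delta}$) as the snap-back points and appeals directly to \cite{Mar}---whereas you go further and check the Jacobian condition along a full homoclinic orbit and flag the genuine subtlety that any backward orbit of $(0,\pm2\delta)$ through the expanding ball must pass through a corner of $\psi_{\delta}$, where smoothing could kill the derivative; your proposed fix (overshoot the local extremum when smoothing and then use an implicit-function adjustment) is the right way to handle this, and supplies detail the paper leaves implicit. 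One small slip: since $\psi_{\delta}$ is odd its derivative is even, so $\psi_{\delta}'(\pm2\delta)=2$ rather than $\pm2$, though this does not affect your argument.
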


\begin{proof}
The restriction $0<\delta\leq0.04$ insures that $\bar{B}_{6\sqrt{2}\delta
}(p_{\ast})$ is contained in the interior of $I^{2}$ and that by taking
$\delta$ sufficiently small, we can further insure that $\left\Vert
F(p)-R_{\delta}(p)\right\Vert <\epsilon$ on $\bar{B}_{6\sqrt{2}\delta}%
(p_{\ast})$. Consequently, (\ref{e17}) and the definition of the function
$\sigma$ implies that $\left\Vert F_{\delta}(p)-F(p)\right\Vert <\epsilon$ on
$I^{2}$. It also follows from (\ref{e17}) that $F_{\delta}(p)=$ $R_{\delta
}(p)$ for all $p\in\bar{B}_{5\sqrt{2}\delta_{0}}(p_{\ast})$, so $p_{\ast}$ is
a hyperbolic repeller for $F_{\delta}$ having - as is clear from Fig. \ref{figB} - four
snap-back points at $(s,t)=(0,\pm2\delta)$ and $(s,t)=(0,\pm5\delta)$. The
chaotic dynamics then follows from \cite{Mar} and the proof is complete.
\end{proof}

\begin{center}
\begin{figure}[th]
\centering
\includegraphics[width=4in]{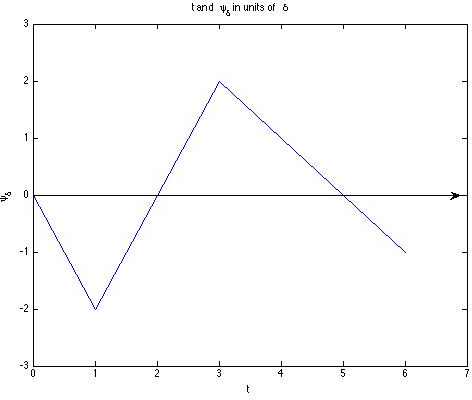}\caption{Coordinate function
for snap-back repeller perturbation}%
\label{figB}%
\end{figure}
\end{center}

It should be noted that the chaos described in Theorem \ref{T3} is essentially
one-dimensional inasmuch as it is confined to the unstable manifold of
$p_{\ast}$ with respect to the minimal model $F$. However, it is not difficult
to see how the construction in the above proof can be modified to obtain
higher dimensional snap-back repeller chaos. One need only consider a
perturbation $\tilde{R}_{\delta}$ given in polar form (with $p_{\ast}$ as the
origin in the $s,t$- coordinate plane) as%
\[
\tilde{R}_{\delta}:=\psi_{\delta}(r)\left(  \cos\theta,\sin\theta\right)  ,
\]
which has two full circles of snap-back points around $p_{\ast}$. If in
addition, we modify $\psi_{\delta}$ in the annulus $2\delta\leq r\leq6\delta$
so that it is negative for certain $\theta$-sectors, the location of the
snap-back points near $p_{\ast}$ can be easily controlled.

\subsection{Chaos generated by embedding transverse homoclinic orbits and
heteroclinic 2-cycles}

Chaos can also be generated by perturbing $F$ so that it has transverse
homoclinic points or transverse heteroclinic 2-cycle points, for then it
exhibits chaotic subshift dynamics (\emph{cf.}%
\cite{ABer,Deng,guho,kathas,palmel,wigbook}).

To begin, we prove a simple result showing how to create transverse
intersections in a homoclinic or heteroclinic curve on a surface with an
arbitrarily small $C^{0}$ perturbation. It should be noted that it is well
known that such transverse intersections can be produced by arbitrarily small
$C^{1}$ perturbations on general $C^{1}$ surfaces (\emph{cf}.
\cite{kathas,palmel,wigbook} ), but we shall find it useful for our
simulations in the sequel to present a specialized $C^{0}$ result for the
plane that is much easier to prove. In fact, the idea of the proof is quite
transparent, involving just a carefully localized small sinusoidal
perturbation normal to the homoclinic or heteroclinic curve (see
Fig.\ref{figC}), but the details are a bit involved. By a \emph{homoclinic
curve} or \emph{heteroclinic curve} joining a single, respectively, pair of
distinct saddle points of a differentiable map of a differentiable surface, we
mean an (open) curve contained in the stable (unstable) and (stable) manifolds
of the single, respectively, pair of points that contains the saddle points in
its closure. A \emph{heteroclinic 2-cycle }is just pair of heteroclinic curves
joining a pair of distinct saddle points.

\begin{lemma}
\label{L4}Let $f:S\rightarrow S$ be a $C^{1}$ self-map of connected surface
$S$ in $\mathbb{R}^{2}$. Suppose that $p$ and $q$ are identical or distinct
saddle points of $f$ joined, respectively, by a homoclinic or heteroclinic
curve $\gamma$ that is contained both in $W^{u}(p)\cap W^{s}(q)$ and a subset
of $S$ that is compact in $\mathbb{R}^{2}$. If there is a point $u_{-1}%
\in\gamma$ such that $u_{0}:=f(u_{-1})$, $u_{1}:=f^{2}(u_{-1})$, $u_{2}:=$
$f^{3}(u_{-1})$ and $u_{3}:=f^{4}(u_{-1})$ are contained in a connected open
neighborhood $U$ of the closed subarc $\kappa$ of $\gamma$ from $u_{-1}$ to
$u_{3}$ and $f$ is invertible on $U$, then there is an arbitrarily small
$C^{0}$ perturbation $g$ of $f$ , equal to $f$ except in an open subset $V$ of
$U$ that contains a nontrivial closed subarc $\sigma$ of $\kappa$ from $u_{0}$
to a point $v_{0}\in\gamma$ not containing $u_{1}$, such that $W_{g}^{u}(p)$
has a transverse intersection with $W_{g}^{s}(q)$ in $f(\sigma)$.
\end{lemma}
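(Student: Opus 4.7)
Following the hint in the excerpt, the plan is to construct an arbitrarily $C^{0}$-small sinusoidal perturbation of $f$ in a thin tubular neighborhood of a subarc of $\gamma$ containing $u_{0}$, so that the image under the perturbed map $g$ oscillates transversely across $f(\sigma)\subset\gamma$ near $u_{1}$.

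First I would parametrize the compact arc $\kappa$ by a $C^{1}$ arc-length parametrization $\alpha:[a,b]\to S$ with $\alpha(t_{i})=u_{i}$ for $i=-1,0,1,2,3$, and pick a continuous unit normal field $\nu(t)$ along $\kappa$. Since $\gamma$ is an embedded arc lying in a compact subset of $\mathbb{R}^{2}$ whose forward and backward iterates accumulate only at $p$ and $q$, and since $\kappa\subset U$ with $f$ invertible on $U$, a thin tubular neighborhood of $\kappa$ furnishes local coordinates $(t,s)$ on an open subset of $U$. Next, pick $\Delta>0$ small with $t_{0}+\Delta<t_{1}$, set $\sigma:=\alpha([t_{0},t_{0}+\Delta])$ and $v_{0}:=\alpha(t_{0}+\Delta)$, and let $V$ be the open tube of radius $\delta_{2}$ around $\alpha((t_{0}-\delta_{1},t_{0}+\Delta+\delta_{1}))$. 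Taking $\delta_{1},\delta_{2}$ sufficiently small ensures $\overline{V}\subset U$ and, crucially, $V\cap f^{j}(\sigma)=\emptyset$ for every $j\in\mathbb{Z}\smallsetminus\{0\}$ (this is the one genuinely geometric verification: it uses embeddedness of $\gamma$, the fact that its iterates accumulate only at the saddles, and the bounded geometry provided by the compact ambient set). With $V$ in hand, let $\chi(t,s)$ be a smooth bump supported in $V$ and equal to $1$ on $[t_{0},t_{0}+\Delta]\times[-\delta_{2}/2,\delta_{2}/2]$, let $\beta(t):=\epsilon\sin(2\pi(t-t_{0})/\Delta)$ with $\epsilon>0$ small, and define
\[
g(x):=f(x)+\chi(t,s)\beta(t)\,\nu_{\ast}(t),\qquad x=\alpha(t)+s\nu(t)\in V,
\]
with $g:=f$ on $S\smallsetminus V$, where $\nu_{\ast}(t)$ denotes a unit normal to $\gamma$ at $f(\alpha(t))$. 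Then $\|g-f\|_{C^{0}}\leq\epsilon$ is arbitrarily small, and $g$ may be smoothed to $C^{\infty}$ without changing its qualitative behavior.

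Then I would verify $\sigma\subset W_{g}^{u}(p)$ and $f(\sigma)\subset W_{g}^{s}(q)$. Since $V$ is far from both $p$ and $q$, we may assume $g=f$ on their local invariant manifolds; for $x\in\sigma$, picking $w\in W_{\mathrm{loc}}^{u}(p)\cap\gamma$ with $f^{k}(w)=x$, each intermediate iterate $f^{j}(w)=f^{-(k-j)}(x)\in f^{-(k-j)}(\sigma)$ lies outside $V$ by the disjointness above, so $g^{k}(w)=f^{k}(w)=x$ and thus $x\in W_{g}^{u}(p)$. Symmetrically, for $y\in f(\sigma)$ the forward iterates $f^{j}(y)\in f^{j+1}(\sigma)$ avoid $V$, so $g^{j}(y)=f^{j}(y)\to q$ and $y\in W_{g}^{s}(q)$. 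Transversality is then immediate: on $\sigma$, $g(\alpha(t))=f(\alpha(t))+\beta(t)\nu_{\ast}(t)$ meets $\gamma$ at the interior zero $t^{\ast}=t_{0}+\Delta/2$ of $\beta$, and at $t^{\ast}$ the tangent to $g\circ\alpha$ acquires the nonzero normal component $\beta'(t^{\ast})\nu_{\ast}(t^{\ast})$. Since forward $g$-invariance gives $g(\sigma)\subset W_{g}^{u}(p)$ and the crossing lies in $f(\sigma)\subset W_{g}^{s}(q)$, this produces the transverse intersection of $W_{g}^{u}(p)$ with $W_{g}^{s}(q)$ inside $f(\sigma)$ claimed by the lemma.

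The main obstacle is not the perturbation formula itself but the geometric bookkeeping required to choose $V$: one must guarantee simultaneously that $V$ is thin and short enough to avoid every forward and backward $f$-iterate of $\sigma$ while still being an honest open neighborhood of a nontrivial subarc of $\kappa$. This disjointness is what prevents the perturbation from ``reinfecting'' $W^{u}(p)$ or $W^{s}(q)$ outside of $V$, and it is precisely where the hypotheses that $\gamma$ lies in a compact subset of $\mathbb{R}^{2}$ and that its iterates accumulate only at $p$ and $q$ are indispensable.
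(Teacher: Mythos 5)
Your proof is correct and follows the paper's overall strategy closely: localize in a thin tubular neighborhood of a short subarc of $\gamma$ near $u_{0}$, add a $C^{0}$-small sinusoidal displacement in the normal direction that vanishes at the midpoint, and read off transversality from the nonzero slope of the sine at its interior zero. Where you diverge from the paper is in the bookkeeping of the invariant manifolds, and yours is the more complete account. The paper only arranges the pairwise disjointness of $f^{-1}(\chi),\chi,f(\chi),f^{2}(\chi)$ and their $\lambda$-tubes (its conditions (P1), (P2)), and then asserts --- without spelling out an argument --- that these facts together with differentiability near the midpoint yield the transverse intersection. You instead demand that the perturbation support $V$ be disjoint from \emph{every} iterate $f^{j}(\sigma)$, $j\neq0$, and use this to give explicit orbit-by-orbit verifications that $\sigma\subset W^{u}_{g}(p)$ and $f(\sigma)\subset W^{s}_{g}(q)$: backward iterates of $\sigma$ stay off $V$ so $g$ agrees with $f$ along the relevant piece of the unstable manifold, and forward iterates of $f(\sigma)$ stay off $V$ so $g$ agrees with $f$ along the relevant piece of the stable manifold. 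The all-$j$ disjointness is genuinely attainable here: the iterates $f^{j}(\sigma)$ are pairwise disjoint subarcs of $\gamma$, $\bigcup_{j\neq0}f^{j}(\sigma)\cup\{p,q\}$ is compact (uniform contraction of diameters near the saddles) and disjoint from the compact arc $\sigma$, so there is a positive gap and one can shrink $\delta_{1},\delta_{2}$ to fit $V$ inside it. A second, more cosmetic improvement is that you perturb in the normal direction $\nu_{*}(t)$ to $\gamma$ at the \emph{image} point $f(\alpha(t))$, which makes the transversality computation at $t^{*}$ immediate, whereas the paper uses the normal at the source projection $\phi(x)$, which in principle requires an extra word to rule out accidental tangency with $\gamma$ at the image. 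In short: same route, tighter execution; your explicit disjointness-for-all-$j$ argument closes the step the paper leaves implicit.
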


\begin{proof}
First, we orient the curve $\gamma$ from $p$ to $q$ so that it follows the
direction of positive iterates. Next, define the (oriented) arclength along
$\gamma$ starting from $u_{0}$ to be $s$, so that $s(u_{0})=0<s(u_{1}%
)<s(u_{2})$ owing to the definition of $\gamma$ and the hypotheses. It follows
from our assumptions that there exists a point $\tilde{u}\in\kappa$ with
$0<s(\tilde{u})<s(u_{1})$ such that the closed subarc $\chi$ of $\gamma$ from
$u_{0}$ to $\tilde{u}$ satisfies the following property:
\begin{itemize}
\item[(P1)] $f^{-1}(\chi)$, $\chi$, $f(\chi)$ and $f^{2}(\chi)$ are pairwise
disjoint closed subarcs of $\ \kappa.$
\end{itemize}
At each point $u\in\gamma$ there is a unit normal vector that is unique if we
specify an orientation, which we do by defining the positive direction to be
consistent with the right hand rule and the positive direction along the
curve. We denote this vector by $\boldsymbol{n}(u)$, which now allows the
definition of a positive and negative distance from $\gamma$ points $x$
sufficiently close to $\gamma$, which we denote by $\nu(x)$. We then combine
this with a coordinate $s(x)$ defined to be $s(\phi(x))$, where $\phi(x):=u$
is the unique (for points sufficiently near $\gamma$) point on the curve such
that the distance $d(x,\gamma)$ from the point $x$ to $\gamma$ is equal to
$\left\Vert x-u\right\Vert $, where $\left\Vert \cdot\right\Vert $ is the
Euclidean norm. Then basic results on normal bundles such as in \cite{Kos}
imply that there exists $\lambda_{0}>0$ such that for every $0<\lambda
\leq\lambda_{0},$%
\begin{equation}
N_{\lambda}:=\left\{  x\in S:s(u_{-1})-\lambda/2<s(x)<s(u_{3})+\lambda
/2;\left\vert \nu(x)\right\vert <\lambda\right\}  \label{eq14}%
\end{equation}
is an open set contained in $U$ and is a neighborhood of the closed subarc
$\kappa$, $N_{\lambda}$ does not contain $p$ or $q$, and it is well defined in
the sense that every $x\in N_{\lambda}$ is uniquely determined by its
coordinates $\left(  s(x),\nu(x)\right)  $. We now have all the tools
necessary to provide a simple description of small $C^{0}$ perturbations that
possess the desired transverse intersections of unstable and stable manifolds.
In virtue of the differentiability of the map and the homoclinic or
heteroclinic curve and the definition of $N_{\lambda}$, there exists
$0<\lambda_{1}<\lambda_{0}$ such that if $0<\lambda\leq\lambda_{1}$ and we
define $\bar{N}_{\lambda}(\chi):=\{x\in S:$ $0\leq s(x)\leq s(\tilde
{u});\left\vert \nu(x)\right\vert \leq\lambda\}$, then the following obtains:
\begin{itemize}
\item[(P2)] $f^{-1}\left(  \bar{N}_{\lambda}(\chi)\right)  $, $\bar
{N}_{\lambda}(\chi)$, $f\left(  \bar{N}_{\lambda}(\chi)\right)  $ and
$f^{2}\left(  \bar{N}_{\lambda}(\chi)\right)  $ are pairwise disjoint closed
subsets of $N_{\lambda_{0}}$.
\end{itemize}
\noindent\noindent Now for any $0<\epsilon<\lambda_{1}$, we define the
perturbation increment function $\Delta_{\epsilon}:S\rightarrow S$ as
\begin{equation}
\Delta_{\epsilon}(x):=\left\{
\begin{array}
[c]{cc}%
\epsilon\left(  1-\lambda^{-1}\left\vert \nu(x)\right\vert \right)
\sin\left(  \frac{2\pi s(x)}{s(\tilde{u})}\right)  \boldsymbol{n}(\phi(x)), &
x\in\bar{N}_{\lambda}(\chi)\\
0, & x\notin\bar{N}_{\lambda}(\chi)
\end{array}
\right.  ,\label{eq15}%
\end{equation}
which is tantamount to saying that the $s$-coordinate function of
$\Delta_{\epsilon}$ is zero and $\nu$-coordinate function is zero except in
$\bar{N}_{\lambda}(\sigma)$ where it is defined to be the coefficient of the
unit normal vector in (\ref{eq15}). Note that this function also vanishes when
$s(x)=0$, $(1/2)s(\tilde{u})$ or $s(\tilde{u})$; that is, all along the
normals to the curve $\gamma$ at $u_{0}$, $\tilde{u}$ and a point $w\in\gamma$
with $s(\tilde{u})<s(w)=(1/2)s(\tilde{u})<s(\tilde{u}).$Moreover, the graph of
(\ref{eq15}) in the $s,\nu$-plane has a transverse intersection with the
$s$-axis at $s=(1/2)s(\tilde{u})$. The desired perturbation of $f$ is%
\begin{equation}
g_{\epsilon}=f+\Delta_{\epsilon},\label{eq16}%
\end{equation}
which is readily seen to have the desired properties. In particular,
$\left\Vert g_{\epsilon}(x)-f(x)\right\Vert \leq\epsilon$ for all $x\in S$,
$W_{g_{\epsilon}}^{u}(p)=W_{f}^{u}(p)$ and $W_{g_{\epsilon}}^{s}(q)=W_{f}%
^{s}(q)$ in a neighborhood of $p$ and $q$, respectively, and the
differentiability of both $f$ and $g_{\epsilon}$ in a neighborhood of $w$
together with (P1), (P2) and the transversality property of $\Delta_{\epsilon
}$ at $w$ guarantee that there is a transverse intersection of $W_{g_{\epsilon
}}^{u}(p)$ and $W_{g_{\epsilon}}^{s}(q)$ at $f(w)$. Thus, the proof is
complete.
\end{proof}

\begin{center}
\begin{figure}[th]
\centering
\includegraphics[width=4in]{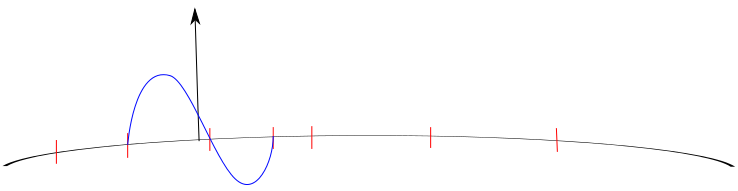}\caption{Transverse
intersection perturbation}%
\label{figC}%
\end{figure}
\end{center}

As a final remark concerning the above lemma, it is rather easy to see how by
smoothing corners and reducing the scale of the perturbation increment, if
necessary, the perturbation of the function can be chosen to be $C^{1}$ small.
Moreover, it is a simple matter to extend the result to general $C^{1} $
surfaces using standard techniques from differential geometry and topology
(\emph{cf.} \cite{kathas,Kos,Rob,palmel,Per,wigbook}).

Next we shall show how to embed $C^{0}$-small dynamics in a neighborhood of
the fixed point $p_{\ast}$ of the minimal model $F$ that has a homoclinic
orbit or a heteroclinic 2-cycle. Once done, we can apply Lemma \ref{L4} to
create chaotic dynamics in arbitrarily small $C^{0}$ perturbations of $F$. We
shall use the time-one maps of the following Hamiltonian differential
equations defined for $\delta>0$ as
\begin{align}
\dot{\xi}  &  =\eta\left(  \delta^{2}+\xi^{2}+\eta^{2}\right)  ,\nonumber\\
\dot{\eta}  &  =\xi\left(  \delta^{2}-\xi^{2}-\eta^{2}\right)  ,\label{e18}%
\end{align}
which has the Hamiltonian function%
\begin{equation}
H_{\delta}=(1/4)\left[  \left(  \xi^{2}+\eta^{2}\right)  ^{2}-2\left(  \xi
^{2}-\eta^{2}\right)  \right]  .\label{e19}%
\end{equation}
Here we have relabeled the $s,t$-coordinates used in subsections 5.1 and 5.2
as $\xi,\eta$-coordinates to avoid confusion with the time parameter $t$ in
(\ref{e18}). This equation yields a pair of homoclinic orbits corresponding to
$H_{\delta}=0$ that are depicted in Fig. \ref{figD}. For the heteroclinic
2-cycle, we choose the system
\begin{align}
\dot{\xi}  &  =\eta,\nonumber\\
\dot{\eta}  &  =-\xi+\left(  \xi^{3}/\delta^{2}\right)  ,\label{e20}%
\end{align}
with Hamiltonian function
\begin{equation}
K_{\delta}=\left(  1/4\right)  \left[  2\left(  \xi^{2}+\eta^{2}\right)
-\left(  x^{4}/\delta^{2}\right)  \right]  .\label{e21}%
\end{equation}
This system has a heteroclinic 2-cycle contained in $K_{\delta}=\delta^{2}/4$,
which is also shown in Fig. \ref{figD}. Now let the solution of (\ref{e18})
and (\ref{e20}) be denoted, respectively as
\begin{equation}
\left(  \xi,\eta\right)  =\varphi_{\delta}\left(  t,\left(  \xi_{0},\eta
_{0}\right)  \right) \label{eq22}%
\end{equation}
and%
\begin{equation}
\left(  \xi,\eta\right)  =\psi_{\delta}\left(  t,\left(  \xi_{0},\eta
_{0}\right)  \right)  .\label{e23}%
\end{equation}

The desired embeddings are obtained from the time-1 maps of the above; namely,
we define%
\begin{equation}
\Phi_{\delta}(\xi,\eta):=\varphi_{\delta}\left(  1,\left(  \xi,\eta\right)
\right) \label{e24}%
\end{equation}
and%
\begin{equation}
\Psi_{\delta}(\xi,\eta):=\psi_{\delta}\left(  1,\left(  \xi,\eta\right)
\right)  .\label{e25}%
\end{equation}
Clearly, these maps have the desired homoclinic and heteroclinic orbits,
respectively. Now, we select $\delta_{0}>0$ so small that $\bar{B}%
_{6\delta_{0}}(p_{\ast})$ is contained in the interior of $I^{2}$ and then
$0<\delta=\delta(\epsilon)\leq\delta_{0}$ for a given $\epsilon>0$ such that
\begin{equation}
\left\Vert F(z)-\Phi_{\delta}(z)\right\Vert <\epsilon\label{e26}%
\end{equation}
and%
\begin{equation}
\left\Vert F(z)-\Psi_{\delta}(z)\right\Vert <\epsilon\label{e27}%
\end{equation}
for all $z\in\bar{B}_{6\delta_{0}}(p_{\ast})$. Whence, we can define
\begin{equation}
\hat{F}_{\delta}(z):=\left(  1-\varkappa(r)\right)  \Phi_{\delta}%
(z)+\varkappa(r)F(z)\label{eq28}%
\end{equation}
and%
\begin{equation}
\tilde{F}_{\delta}(z):=\left(  1-\varkappa(r)\right)  \Psi_{\delta
}(z)+\varkappa(r)F(z),\label{eq29}%
\end{equation}
where $r:=\left\Vert z-p_{\ast}\right\Vert $ and%
\begin{equation}
\varkappa(r):=\left\{
\begin{array}
[c]{cc}%
0, & 0\leq r\leq3\delta\\
\left(  1/\delta\right)  \left(  r-3\delta\right)  , & 3\delta\leq
r\leq4\delta\\
1, & r\geq4\delta
\end{array}
\right.  .\label{e30}%
\end{equation}

\begin{center}
\begin{figure}[th]
\centering
\includegraphics[width=0.49\textwidth]{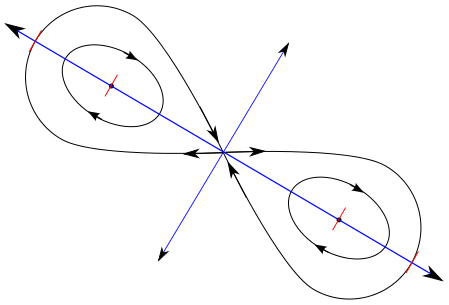}\includegraphics[width=0.49\textwidth]{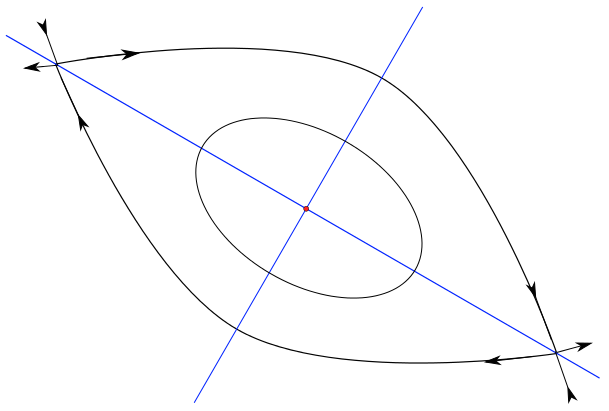}\caption{Embedding
homoclinic and heteroclinic orbits}%
\label{figD}%
\end{figure}
\end{center}

It follows from (\ref{e20})-(\ref{eq29}) that $\left\Vert F(z)-\hat{F}%
_{\delta}(z)\right\Vert <\epsilon$ and $\left\Vert F(z)-\tilde{F}_{\delta
}(z)\right\Vert <\epsilon$ for all $z\in I^{2}$ and that $\hat{F}_{\delta}$
has a homoclinic orbit comprised of the stable and unstable manifolds at
$p_{\ast}$ and $\tilde{F}_{\delta}$ has a heteroclinic 2-cycle centered at
$p_{\ast}$ as shown in Fig. \ref{figD}. Whence, we may, after choosing
$\delta$ smaller if necessary, use Lemma \ref{L4} to further perturb $\hat
{F}_{\delta}$ and $\tilde{F}_{\delta}$ to create a transverse intersection
point in the homoclinic orbit and, respectively, to create a transverse
intersection point in one or both components of the heteroclinic 2-cycle,
while still remaining $\epsilon$-close to the original maps in the $C^{0}$
metric. Then $\lambda$-Lemma based arguments of the type developed in such
sources as \cite{ABer,Deng,palmel,wigbook} and the smoothing discussed above
lead directly to the following result.

\begin{theorem}
\label{T4} For every $\epsilon>0$ there exist smooth $C^{0}$ $\epsilon$-close
perturbations of the minimal model map $F$ exhibiting transverse homoclinic
point or transverse heteroclinic 2-cycle induced chaos.
\end{theorem}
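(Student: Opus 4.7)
The plan is essentially to bolt together the three ingredients built up in Subsection 5.3: the Hamiltonian time-one embeddings $\hat{F}_\delta$ and $\tilde{F}_\delta$ (which give exact homoclinic and heteroclinic structure inside a small disk around $p_\ast$), the cutoff perturbation through $\varkappa(r)$ (which glues these local models to the minimal map $F$ while preserving $C^0$-closeness), and the transverse-intersection Lemma~\ref{L4} (which is what turns a tangent connection into a genuinely transverse one). Once a transverse homoclinic or heteroclinic intersection is in hand, I would invoke the Smale--Birkhoff homoclinic theorem (or, for the 2-cycle case, its heteroclinic analog via the $\lambda$-lemma argument of \cite{palmel,wigbook,ABer,Deng}) to obtain an invariant hyperbolic Cantor set on which some iterate of the perturbed map is topologically conjugate to a full shift, hence chaotic.

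More concretely, given $\epsilon>0$, I would first choose $\delta_1=\delta_1(\epsilon/3)\le \delta_0$ so small that (\ref{e26}) and (\ref{e27}) hold with $\epsilon$ replaced by $\epsilon/3$; the resulting $\hat{F}_{\delta_1}$, $\tilde{F}_{\delta_1}$ then satisfy $\|F-\hat{F}_{\delta_1}\|<\epsilon/3$ and $\|F-\tilde{F}_{\delta_1}\|<\epsilon/3$ on all of $I^2$, and each carries, by construction, a pair of smooth homoclinic loops at $p_\ast$ (respectively, a heteroclinic 2-cycle through two saddles near $p_\ast$ created by the Hamiltonian \eqref{e21}). Next, on each such connecting curve $\gamma$ I would pick a base point $u_{-1}\in\gamma$ far enough from the saddles so that the four iterates $u_0,u_1,u_2,u_3$ lie in a relatively compact open neighborhood $U\subset I^2$ on which $\hat F_{\delta_1}$ (or $\tilde F_{\delta_1}$) is a smooth diffeomorphism; this is possible because the flows \eqref{e18} and \eqref{e20} are smooth time-1 diffeomorphisms with nondegenerate saddles. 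Lemma~\ref{L4} then produces an arbitrarily small $C^0$ increment $\Delta_{\epsilon/3}$ supported in a thin tube $\bar N_\lambda(\chi)$ about a subarc of $\gamma$, yielding a new map $g=\hat F_{\delta_1}+\Delta_{\epsilon/3}$ (resp.\ $\tilde F_{\delta_1}+\Delta_{\epsilon/3}$) that agrees with its predecessor outside the tube, carries the same local stable/unstable manifolds at the saddle(s), and has a genuine transverse intersection of $W^u$ and $W^s$. A triangle inequality gives $\|F-g\|<2\epsilon/3<\epsilon$, and finally smoothing the corners of $\psi_\delta$, $\phi_\delta$, $\varkappa$, and $\Delta_\epsilon$ inside a set of arbitrarily small measure upgrades $g$ to a $C^\infty$ map within the remaining $\epsilon/3$ budget without destroying the transverse intersection.

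With a transverse homoclinic point in hand, the Smale--Birkhoff theorem furnishes an $N$ and a $g^N$-invariant hyperbolic set on which $g^N$ is conjugate to the Bernoulli shift on two symbols; the heteroclinic 2-cycle case is handled identically after applying Lemma~\ref{L4} on (at least) one of the two connecting arcs, using the standard $\lambda$-lemma shadowing construction for heteroclinic chains (\cite{ABer,Deng,palmel,wigbook}) to produce the symbolic conjugacy. This yields the chaotic subshift dynamics asserted in the theorem.

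The main obstacle is not any individual step but the bookkeeping needed to make sure the \emph{combined} perturbation $F \leadsto \hat F_{\delta_1} \leadsto g \leadsto (\text{smoothed } g)$ stays $\epsilon$-close to $F$ while simultaneously (i) preserving the homoclinic/heteroclinic invariant curves of the Hamiltonian model outside the tube $\bar N_\lambda$, (ii) keeping the local saddle structure at $p_\ast$ (or the two heteroclinic saddles) undisturbed so that $W^s$ and $W^u$ still pass through the transverse-crossing region, and (iii) maintaining enough smoothness for the $\lambda$-lemma to apply. The hypothesis of Lemma~\ref{L4} that $f$ be invertible on $U$ is the most delicate bookkeeping item, because $\hat F_{\delta_1}$ is a time-one map of a smooth flow and hence a diffeomorphism, but one must verify that the gluing via $\varkappa$ does not destroy invertibility on the chosen neighborhood $U$ containing $u_{-1},\dots,u_3$; choosing $U$ strictly inside the disk $\{r\le 3\delta\}$, where $\hat F_{\delta_1}\equiv \Phi_\delta$ exactly, sidesteps this difficulty cleanly.
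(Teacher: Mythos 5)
Your proposal follows the paper's own argument essentially verbatim: glue the Hamiltonian time-one maps $\hat F_\delta$, $\tilde F_\delta$ to $F$ via the cutoff $\varkappa$, invoke Lemma \ref{L4} to create a transverse homoclinic or heteroclinic intersection, and then appeal to Smale--Birkhoff/$\lambda$-lemma results together with a final smoothing step. The extra bookkeeping you supply --- the explicit $\epsilon/3$ budget across the two perturbation stages plus smoothing, and the observation that choosing $U$ inside $\{r\le 3\delta\}$ automatically satisfies the invertibility hypothesis of Lemma \ref{L4} because $\hat F_\delta\equiv\Phi_\delta$ there --- is a correct and worthwhile tightening of the paper's terser sketch.
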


\subsection{Multihorseshoe strange attractor perturbations}

The next element of our description and analysis of more pervasive chaotic
$C^{0}$ perturbations of the map $F$ defined by (\ref{eq2})-(\ref{eq3})
involves embedding a symmetric pair of attracting horseshoes along the lines
introduced in \cite{JB}. In particular, we shall show how to $C^{0}$ perturb
$F$ to produce multihorseshoe (more precisely, double-horseshoe) chaos
(\emph{cf.}\cite{JB}, as shown in Fig. \ref{figE}.

To begin the construction of the embedding, we note first that the minimal map
$F$ is actually defined and smooth at all points in the $x,y$-plane above the
curve defined by $x+y-xy=0$. Next, for convenience, we denote the period-2
points $(1,0)$ and $(0,1)$ by $p$ and $q$, respectively. Let $\epsilon>0$ be
given. As both $p$ and $q$ are fixed points of $F^{2}$ and of $G^{2}$, where
$G$ is the reflection in the line $y=x$, we may choose $0<\delta\leq1/4$ such
that
\begin{equation}
\left\Vert F(z)-G(z)\right\Vert <\epsilon/2\label{e5.4.1}%
\end{equation}
whenever $z\in\bar{B}_{10\sqrt{2}\delta}(p)\cup\bar{B}_{10\sqrt{2}\delta}(q)$.
Define
\begin{equation}
\tilde{F}_{\epsilon}(z):=\left\{
\begin{array}
[c]{cc}%
F(z), & z\notin\bar{B}_{10\sqrt{2}\delta}(p)\cup\bar{B}_{10\sqrt{2}\delta
}(q)\\
\left(  1-\omega(r_{p})\right)  G(z)+\omega(r_{p})F(z), & z\in\bar{B}%
_{10\sqrt{2}\delta}(p)\\
\left(  1-\omega(r_{q})\right)  G(z)+\omega(r_{q})F(z), & z\in\bar{B}%
_{10\sqrt{2}\delta}(q)
\end{array}
\right.  ,\label{e5.4.2}%
\end{equation}
where $r_{p}:=\left\Vert z-p\right\Vert $, $r_{q}:=\left\Vert z-q\right\Vert $
and
\begin{equation}
\omega(r):=\left\{
\begin{array}
[c]{cc}%
0, & 0\leq r\leq8\delta\sqrt{2}\\
\left(  1/\delta\sqrt{2}\right)  \left(  r-8\delta\sqrt{2}\right)  , &
8\delta\sqrt{2}\leq r\leq9\delta\sqrt{2}\\
1, & r\geq9\delta\sqrt{2}%
\end{array}
\right.  .\label{e5.4.3}%
\end{equation}
It is clear from the definition and (\ref{e5.4.1}) that
\begin{equation}
\left\Vert F(z)-\tilde{F}_{\epsilon}(z)\right\Vert <\epsilon/2.\label{e5.4.4}%
\end{equation}

It now remains to make a final adjustment of the map $\tilde{F}_{\epsilon}$
that has the attracting horseshoes. To this end, we define the following
perturbation of the identity in $\bar{B}_{10\sqrt{2}\delta}(p)\cap I^{2}$
using the original $x,y$-coordinates%
\[
\Theta\left(  x,y\right)  =\left(  \varphi(x),\psi(x,y)\right)  ,
\]
where
\begin{equation}
\varphi(x):=\left\{
\begin{array}
[c]{cc}%
1, & x\leq1-7\delta\text{ or }1-\delta\leq x\leq1\\
2(x+\delta)-1, & 1-4\delta\leq x\leq1-\delta\\
-2(x+7\delta)+3, & 1-7\delta\leq x\leq1-4\delta
\end{array}
\right. \label{e5.4.5}%
\end{equation}
and%
\begin{equation}
\psi(x,y):=\left\{
\begin{array}
[c]{cc}%
y/5, & x\leq1-9\delta\text{ or }1-5\delta\leq x\leq1\\
\left(  y/5\right)  +6\left(  1-5\delta-x\right)  , & 1-6\delta\leq
x\leq1-5\delta\\
\left(  y/5\right)  +6\delta & 1-7\delta\leq x\leq1-6\delta\\
\left(  y/5\right)  +3\left(  x+9\delta-1\right)  & 1-9\delta\leq
x\leq1-7\delta
\end{array}
\right.  .\label{e5.4.6}%
\end{equation}
Observe that $\Theta$ has a sink at $p$ and a saddle point at $\left(
1-2\delta,0\right)  $ with a horizontal unstable and vertical stable manifold.
Moreover, $\Theta$ maps the rectangle $[1-4\delta,1-\delta]\times
\lbrack-\delta,9\delta]$ onto a piecewise smooth (and of course smoothable)
attracting horseshoe as defined in \cite{JB}. This function can be reflected
in the line $y=x$ to obtain the symmetric map $\hat{\Theta}:=G\circ\Theta\circ
G$ mapping the rectangle $[-\delta,9\delta]\times\lbrack1-4\delta,1-\delta]$
onto the reflection of the horseshoe image of $\Theta$ in $y=x.$

Now it follows directly from the above definitions that, taking $\delta$
smaller if necessary, we can insure that
\begin{equation}
\left\Vert G(z)-\Theta\circ G(z)\right\Vert <\epsilon/2\label{e5.4.7}%
\end{equation}
when $z\in\bar{B}_{10\sqrt{2}\delta}(q)\cap I^{2}$ and
\begin{equation}
\left\Vert G(z)-\hat{\Theta}\circ G(z)\right\Vert <\epsilon/2\label{e5.4.8}%
\end{equation}
for all $z\in\bar{B}_{10\sqrt{2}\delta}(p)\cap I^{2}$. Therefore, also taking
into account (\ref{e5.4.4}), the modification $F_{\epsilon}$ of $\tilde
{F}_{\epsilon}$ defined as
\[
F_{\epsilon}(z):=\left\{
\begin{array}
[c]{cc}%
F(z), & z\notin\bar{B}_{10\sqrt{2}\delta}(p)\cup\bar{B}_{10\sqrt{2}\delta
}(q)\\
\left(  1-\omega(r_{p})\right)  \hat{\Theta}\circ G(z)+\omega(r_{p})F(z), &
z\in\bar{B}_{10\sqrt{2}\delta}(p)\\
\left(  1-\omega(r_{q})\right)  \Theta\circ G(z)+\omega(r_{q})F(z), & z\in
\bar{B}_{10\sqrt{2}\delta}(q)
\end{array}
\right.
\]
satisfies
\[
\left\Vert F(z)-F_{\epsilon}(z)\right\Vert <\epsilon.
\]
Moreover, we note that both $p$ and $q$ are attracting fixed points (sinks) of
$F_{\epsilon}^{2}$, just as they are of $F^{2}$, while $(1-2\delta,0)$ and
$(0,1-2\delta)$ comprise a 2-cycle of $F_{\epsilon}$ (consisting of saddle
points of $F_{\epsilon}^{2}$) but not $F$. Finally, as constructed, each of
the symmetric horseshoes shown in Fig. E is an attracting horseshoe of $F$, so
that it follows from the main multihorseshoe theorem in \cite{JB} that we have
now proved the following result.

\begin{theorem}
\label{T5} For every positive $\epsilon$ there is a smooth $C^{0}$ $\epsilon$
- close perturbation $F_{\epsilon}$ of the minimal map $F$ having a strange
chaotic double-horseshoe attractor with attracting horseshoes in neighborhoods
of the points $(1,0)$ and $(0,1)$.
\end{theorem}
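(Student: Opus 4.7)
The plan is essentially to tie off the construction already erected in equations (\ref{e5.4.1})--(\ref{e5.4.8}) by doing three things: closing the global $C^{0}$-closeness estimate, identifying the periodic-point structure of $F_{\epsilon}$ near $p=(1,0)$ and $q=(0,1)$, and invoking the main attracting-multihorseshoe theorem of \cite{JB} as a black box on each of the two symmetric neighborhoods.

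First I would verify the global bound $\|F_{\epsilon}(z)-F(z)\|<\epsilon$ on $I^{2}$. Outside $\bar{B}_{10\sqrt{2}\delta}(p)\cup\bar{B}_{10\sqrt{2}\delta}(q)$ this is immediate because $F_{\epsilon}=F$ there. Inside $\bar{B}_{10\sqrt{2}\delta}(p)$ we have $F_{\epsilon}(z)-F(z)=(1-\omega(r_{p}))\left[\hat{\Theta}\circ G(z)-F(z)\right]$, and the triangle inequality combined with (\ref{e5.4.1}) (giving $\|F-G\|<\epsilon/2$) and (\ref{e5.4.8}) (giving $\|G-\hat{\Theta}\circ G\|<\epsilon/2$) yields the bound since $0\le 1-\omega\le 1$. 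The computation near $q$ is entirely symmetric using (\ref{e5.4.7}).

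Next I would nail down the periodic structure and the horseshoe geometry. Because $\omega\equiv 0$ on the inner ball of radius $8\delta\sqrt{2}$ about each of $p$ and $q$, $F_{\epsilon}$ coincides there with $\hat{\Theta}\circ G$ and $\Theta\circ G$, respectively. Since $G$ swaps $p$ and $q$ while $\Theta$ fixes $p$ and $\hat{\Theta}$ fixes $q$, both $p$ and $q$ are fixed points of $F_{\epsilon}^{2}$, and the pair $(1-2\delta,0),(0,1-2\delta)$ is a saddle $2$-cycle of $F_{\epsilon}$. The explicit piecewise-linear formulas (\ref{e5.4.5})--(\ref{e5.4.6}) show that $\Theta$ carries the rectangle $[1-4\delta,1-\delta]\times[-\delta,9\delta]$ onto a U-shape whose two legs re-enter the rectangle with horizontal expansion factor $2$ and vertical contraction factor $1/5$, so after smoothing the corners this is an attracting horseshoe in the precise sense of \cite{JB}. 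By the $\mathbb{Z}_{2}$-symmetry built into the definition, the same holds for $\hat{\Theta}$ on the reflected rectangle. The main multihorseshoe theorem of \cite{JB}, applied to $F_{\epsilon}^{2}$ in each of the two neighborhoods, then produces a strange chaotic attractor inside each horseshoe, and the union gives the double-horseshoe strange chaotic attractor claimed.

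The principal obstacle is the geometric verification in Step 3: one must confirm that $\Theta$ truly satisfies the attracting-horseshoe hypotheses used in \cite{JB}, i.e.\ that the two image legs straddle the domain rectangle with the correct orientation and overlap, that the expansion/contraction ratios read off from (\ref{e5.4.5})--(\ref{e5.4.6}) persist after rounding out the piecewise-linear corners, and that the rectangle is forward-invariant under $F_{\epsilon}^{2}$ so that the invariant set is genuinely attracting rather than merely invariant. These checks are routine but tedious, and must be completed from the explicit coordinate formulas before the \cite{JB} theorem can be cited cleanly. Once that is in hand, the symmetry $\hat{\Theta}=G\circ\Theta\circ G$ and the $F_{\epsilon}^{2}$-invariance of $\{p,q\}$ link the two horseshoes into the single double-horseshoe attractor of the theorem.
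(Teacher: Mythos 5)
Your proposal follows essentially the same route as the paper: close the global $C^{0}$ estimate via the triangle inequality from (\ref{e5.4.1}), (\ref{e5.4.7}) and (\ref{e5.4.8}); observe that inside the inner balls $F_{\epsilon}$ reduces to $\hat{\Theta}\circ G$ and $\Theta\circ G$ respectively, so $p$, $q$ are sinks of $F_{\epsilon}^{2}$ and $(1-2\delta,0)$, $(0,1-2\delta)$ form a saddle $2$-cycle; read the horseshoe image off the piecewise-linear formulas (\ref{e5.4.5})--(\ref{e5.4.6}); and invoke the multihorseshoe attractor theorem of \cite{JB}. You are actually a bit sharper than the paper in two places: you explicitly note that the horseshoe argument applies to the return map $F_{\epsilon}^{2}$ (which, since $G^{2}=\mathrm{id}$, equals $\Theta^{2}$ in the inner ball near $p$ and $\hat{\Theta}^{2}$ near $q$), whereas the paper's final sentence speaks loosely of horseshoes ``of $F$''; and you correctly flag that the genuine residual work is verifying from (\ref{e5.4.5})--(\ref{e5.4.6}) that the image really satisfies the attracting-horseshoe hypotheses of \cite{JB}, a check the paper asserts but does not write out.
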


It is interesting to note that the strange attractor for the small
perturbation of $F$ in Theorem \ref{T5} resembles a discrete analog of the
\textquotedblleft double scroll\textquotedblright\ attractor for Chua's
circuit (\emph{cf}. \cite{chua,CWHZ}), also seen in the dynamics of RSFF
realization simulations such as in \cite{CG} and could, with some minor
modification, produce discrete analogs of the attractors found in the various
physical circuit (ODE) models such as in \cite{KA,kac,LMH,msd,okt}.

\begin{center}
\begin{figure}[th]
\centering
\includegraphics[width=4in]{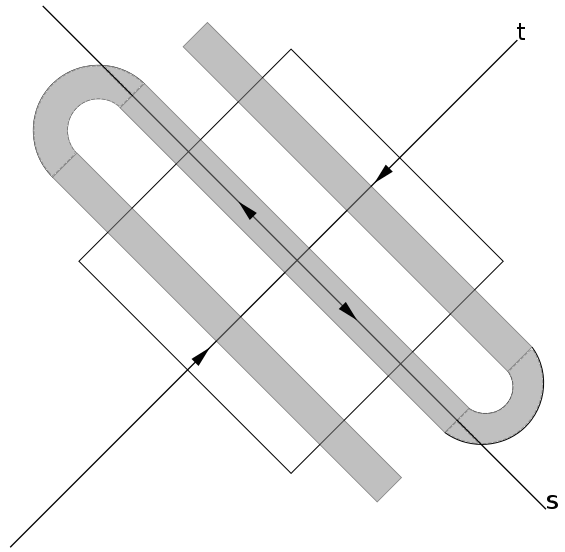}\caption{Embedded double-horseshoe
chaotic strange attractor}%
\label{figE}%
\end{figure}
\end{center}

\subsection{Neimark--Sacker bifurcation perturbations}

Our final result shall show that Neimark--Sacker bifurcations can occur in
arbitrarily small $C^{0}$ perturbations of the minimal model map $F$. Once
again, we use the $\xi,\eta$-coordinates employed in preceding subsections to
define for $\delta>0$ the following parameter-dependent map in polar
coordinates in the $\xi,\eta$ - plane with origin at $p_{\ast}$%
\begin{equation}
N_{\delta}\left(  \xi,\eta;\mu,\alpha\right)  :=-\delta\tanh(\mu
r/\delta)\left(  \cos(\theta+\alpha),\sin(\theta+\alpha)\right)
,\label{e5.5.1}%
\end{equation}
where $1/2<\mu<3/2$, $\alpha$ is nonnegative and $r:=\left\Vert (\xi
,\eta)-p_{\ast}\right\Vert $. It is easy to verify that the origin is a global
attractor of (\ref{e5.5.1}) for $1/2<\mu<1$ and a local repeller for
$1<\mu<3/2$, so $\mu=1$ is a bifurcation value. Moreover, as $\mu$ increases
across $1$ with a corresponding transition of the origin from a sink to a
source, a stable invariant circle of radius $r=r(\mu)$, where $r(\mu)$ is the
unique positive solution of%
\begin{equation}
\delta\tanh(\mu r/\delta)=r.\label{e5.5.2}%
\end{equation}
The parameter $\alpha$ just represents the rotation of the map (\ref{e5.5.1}),
so what we have is a Neimark--Sacker bifurcation at $\mu=1$.

Now just as in the preceding subsections, we can choose $\delta_{0}>0$ so
small that $\bar{B}_{6\delta_{0}}(p_{\ast})$ is contained in the interior of
$I^{2}$ and then $0<\delta=\delta(\epsilon)\leq\delta_{0}$ for a given
$\epsilon>0$ such that
\begin{equation}
\left\Vert F(z)-N_{\delta}(z)\right\Vert <\epsilon\label{e5.5.3}%
\end{equation}
for all $\mu\in(1/2,3/2)$ and $\alpha\geq0$ whenever $z\in\bar{B}_{6\delta
_{0}}(p_{\ast}).$ Therefore, by defining
\[
\check{F}_{\delta}(z):=\left(  1-\varkappa(r)\right)  N_{\delta}%
(z)+\varkappa(r)F(z),
\]
where $\varkappa$ is defined just as in (\ref{e30}), we obtain a map that is
$\epsilon$-close to $F$ in the $C^{0}$ norm and has the desired bifurcation
properties. In short, we have now proved the following result.

\begin{theorem}
\label{T6} For every positive $\epsilon$ there is a smooth $C^{0}$ $\epsilon$
- close perturbation $\check{F}$ of the minimal map $F$ having a
Niemark--Sacker bifurcation at $p_{\ast}$. \
\end{theorem}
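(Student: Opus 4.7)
The strategy is to follow exactly the construction already set up in the excerpt: take the parameter family $N_\delta$ defined by (\ref{e5.5.1}), verify directly that it undergoes a bona fide Neimark--Sacker bifurcation at $\mu = 1$, and then transplant this bifurcation into $F$ via the bump interpolation $\check F_\delta := (1-\varkappa(r))N_\delta + \varkappa(r)F$, which coincides with $N_\delta$ on $\bar B_{3\delta}(p_*)$ and with $F$ outside $\bar B_{4\delta}(p_*)$. Because $\varkappa$ is exactly the cutoff from (\ref{e30}) and $N_\delta(p_*)=p_*$ equals $F(p_*)$, the two maps are compared on a small disk only, so a direct application of the continuity estimate (\ref{e5.5.3}) gives $\|F - \check F_\delta\| < \epsilon$ throughout $I^2$; the final smoothing step needed to upgrade $\check F_\delta$ to a $C^\infty$ map with the same bifurcation behavior is the same rounding-of-corners argument used for Theorems \ref{T2}--\ref{T5} and does not perturb the dynamics qualitatively.

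The central calculation is the verification that $N_\delta$ really is a Neimark--Sacker family. First I would recast $N_\delta$ in strictly positive polar coordinates: the minus sign in (\ref{e5.5.1}) amounts to a rotation by $\pi$, so
\begin{equation*}
N_\delta(r,\theta) = \bigl(\delta\tanh(\mu r/\delta),\; \theta + \alpha + \pi\bigr).
\end{equation*}
The dynamics thus decouples into a rigid rotation by $\alpha+\pi$ on the angular variable and a scalar radial map $g_\mu(r) := \delta\tanh(\mu r/\delta)$ with $g_\mu(0)=0$ and $g_\mu'(0)=\mu$. Linearising at $p_*$ gives eigenvalues $\mu e^{\pm i(\alpha+\pi)}$, which cross the unit circle transversally at $\mu = 1$ with nonzero angular speed, satisfying the eigenvalue-crossing hypothesis of the Neimark--Sacker theorem. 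For the bifurcating invariant circle, the equation $g_\mu(r) = r$ is exactly (\ref{e5.5.2}); because $g_\mu$ is strictly concave on $r>0$ with $g_\mu(r)/r \to \mu$ as $r\downarrow 0$ and $g_\mu(r)/r \to 0$ as $r\to\infty$, a unique positive solution $r(\mu)$ exists precisely for $\mu > 1$, depends smoothly on $\mu$, is linearly stable since $g_\mu'(r(\mu)) < 1$, and satisfies $r(\mu)\to 0$ as $\mu\downarrow 1$. Combined with the rigid rotation by $\alpha+\pi$, this produces the stable invariant circle of radius $r(\mu)$ characteristic of a supercritical Neimark--Sacker bifurcation.

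The main obstacle is the genericity hypothesis of the Neimark--Sacker theorem, namely nonvanishing of the first Lyapunov coefficient and avoidance of the strong resonances $e^{i(\alpha+\pi)} \in \{\pm 1, \pm i, e^{\pm 2\pi i/3}\}$. The second issue is handled by choosing $\alpha$ in the free open set $(0,2\pi)\setminus\{0,\pi/2,\pi,3\pi/2,\pi/3,5\pi/3\}-\pi$, which one is entitled to do since $\alpha$ is a free parameter in the construction. The nondegeneracy of the first Lyapunov coefficient follows from the $\mathrm{SO}(2)$-symmetry of $N_\delta$ together with the fact that the radial normal form is governed by $g_\mu$; expanding $g_\mu(r)=\mu r - \tfrac13(\mu/\delta)^2 \mu r^3 + O(r^5)$ shows that the cubic radial coefficient is strictly negative, giving a nonzero (supercritical) Lyapunov coefficient. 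Because these conditions depend only on $N_\delta$ in a neighborhood of $p_*$, where $\check F_\delta\equiv N_\delta$, the bifurcation transfers verbatim from $N_\delta$ to $\check F_\delta$, completing the proof.
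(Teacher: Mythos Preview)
Your proposal is correct and follows essentially the same construction as the paper: define the rotationally symmetric family $N_\delta$ via (\ref{e5.5.1}), observe the birth of a stable invariant circle at $\mu=1$ through the radial fixed-point equation (\ref{e5.5.2}), and transplant this into $F$ by the cutoff $\check F_\delta := (1-\varkappa(r))N_\delta + \varkappa(r)F$. The paper's argument is terser---it simply asserts that the appearance of the stable circle constitutes a Neimark--Sacker bifurcation---whereas you go further and explicitly verify the eigenvalue-crossing, nonresonance, and first Lyapunov coefficient conditions; this extra care is welcome but not a different route.
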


A direct detailed construction was used for the proof of Theorem \ref{T6}, but
the same result can be proved, using the main theorem in \cite{CB}, for any
perturbation in which $p_{\ast}$ changes from an attractor to a local repeller
as a parameter is varied. It should also be noted that one can, by a
straightforward modification of the above procedure, construct arbitrarily
small $C^{0}$ perturbations of $F$ exhibiting doubling Neimark--Sacker (Hopf)
cascades like those in the ad hoc RSFF model analyzed in \cite{BRS}.

\section{Simulations, Computations and Comparisons}

Our purpose in this section is to show with just a few examples that our
discrete dynamical model - when properly perturbed - shares many properties
with actual physical realizations (and their associated mathematical models)
of the \emph{R}-\emph{S} flip-flop and related circuits such as in
\cite{Danca, KA, kac, LMH, Moser, msd, okt, rzw, ztkbn}. For example, we have
already demonstrated in Section 4 that our model can be perturbed so that it
exhibits the chaos found by simulation in a one-dimensional map associated
with the dynamics of the realization of the flip-flop circuit - which is not
an R-S flip-flop circuit - investigated in \cite{okt}, and it is this kind of
chaos we consider in the next subsection.

\subsection{Perturbed minimal model with one-dimensional chaos}

We derive the perturbed map by plugging in (\ref{e12}) to (\ref{eq3}) where
$x\rightarrow\phi_{\sigma}(x)$ and $y\rightarrow\phi_{\delta}(y)$,%

\begin{equation}
\label{Eq: RSFF}%
\begin{split}
\xi(x,y) := \frac{\phi_{\delta}(y)(1-\phi_{\sigma}(x))}{1-(1-\phi_{\sigma
}(x))(1-\phi_{\delta}(y))}\\
\eta(x,y) := \frac{\phi_{\sigma}(x)(1-\phi_{\delta}(y))}{1-(1-\phi_{\sigma
}(x))(1-\phi_{\delta}(y))}.
\end{split}
\end{equation}

The iterates of the perturbed map are shown in Fig. \ref{Fig: Chaos1D}.

\begin{figure}[ptbh]
\begin{subfigure}{.49\linewidth}
\includegraphics[height = 1.9in]{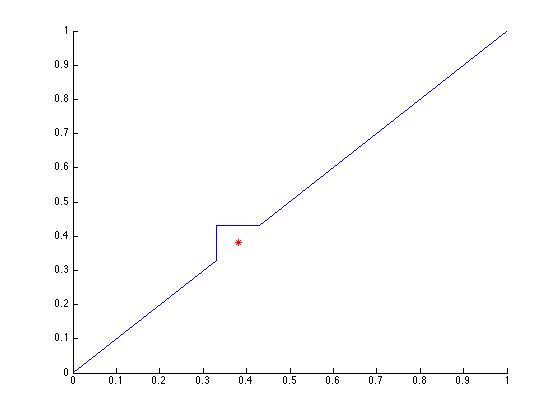}
\subcaption{Illustration of Eq. \ref{e12}.\label{Fig: Kink}}
\end{subfigure}
\begin{subfigure}{.49\linewidth}
\includegraphics[height = 1.9in]{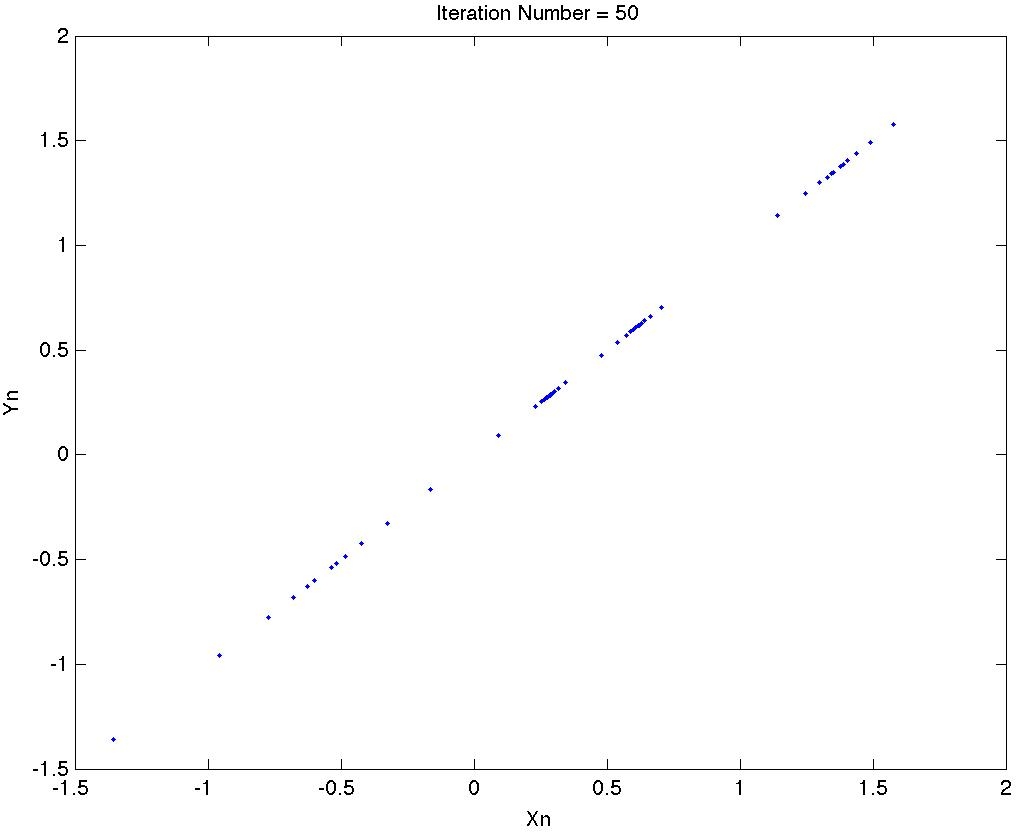}
\subcaption{Plot of the iterates of the perturbed map.\label{Fig: Chaos1D}}
\end{subfigure}
\end{figure}

\begin{corollary}
The perturbation in the preceding theorem can be chosen so that it is a
$C^{\infty}$ function having the same qualitative dynamics as the function
constructed above.
\end{corollary}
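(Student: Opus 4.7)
The plan is to smooth the piecewise-linear construction from the preceding theorem by a standard mollification and then argue that the chaotic features of the diagonal dynamics survive the smoothing. The non-smoothness of the perturbation is confined to the finitely many corners of the tent-type function $\psi_{\sigma}$ in (\ref{e12}) — together with the corners introduced when $f_{\sigma}$ is joined back to $f$ outside $[x_{\ast}-8\sigma,x_{\ast}+4\sigma]$ and when $F_{\sigma}$ is linearly interpolated off the diagonal to match $F$. Outside arbitrarily small neighborhoods of these finitely many corner points, the constructed map is already real-analytic.

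First I would fix a standard $C^{\infty}$ bump function $\rho_{\eta}$ of unit mass supported in $(-\eta,\eta)$, and replace each coordinate function in a neighborhood of each corner by its convolution with $\rho_{\eta}$, leaving the function untouched elsewhere. Choosing $\eta\ll\sigma$ yields a $C^{\infty}$ map $\tilde F_{\sigma}$ that agrees with $F_{\sigma}$ outside a union of neighborhoods of arbitrarily small diameter, so the $C^{0}$ estimate (e) from the proof of Theorem~\ref{T1} remains in force, possibly with $8\sigma$ replaced by $9\sigma$. Next I would verify that the smoothed restriction $\tilde f_{\sigma}$ to the diagonal still satisfies the structural properties (i)–(iv) used in that proof: namely, $\tilde f_{\sigma}$ is still unimodal on $J_{\sigma}$, its critical value coincides with the endpoint of the scaled tent interval to within $O(\eta)$, and the second iterate $\tilde f_{\sigma}^{\,2}$ is still a two-branched map expanding by a factor close to $2$ on each branch away from an $\eta$-neighborhood of the critical point.

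The main obstacle is that a generic $C^{\infty}$ perturbation of a tent map need not preserve full symbolic dynamics — the kneading sequence can change and periodic windows can open. To handle this, I would arrange the smoothing so that the critical value of $\tilde f_{\sigma}^{\,2}$ still lands exactly on the fixed boundary point of $J_{\sigma}$ (this can be enforced by a single real affine rescaling of $\tilde f_{\sigma}$ inside $J_{\sigma}$, which is a negligible extra $C^{0}$ perturbation). Under that normalization $\tilde f_{\sigma}^{\,2}$ admits the same full Markov partition $\{[0,\sigma],[\sigma,2\sigma]\}$ with both branches mapping onto $J_{\sigma}$, so it is semi-conjugate to the one-sided shift on two symbols and therefore has the same topological entropy, a dense set of periodic orbits and a dense orbit as $\tau_{\sigma}$.

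Finally, the two-dimensional extension is handled exactly as in the proof of Theorem~\ref{T1}: the symmetric linear interpolation between $\tilde f_{\sigma}$ on the diagonal and $F$ off the diagonal is replaced by a $C^{\infty}$ partition-of-unity interpolation using a smooth cut-off in distance to the diagonal. This preserves $\mathbb{Z}_2$-symmetry, invariance of the diagonal, the $C^{0}$-closeness bound, and leaves the off-diagonal qualitative dynamics unchanged, so the resulting $C^{\infty}$ map has the same qualitative dynamics as the piecewise-smooth $F_{\sigma}$ and the conclusion follows.
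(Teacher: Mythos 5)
Your proposal takes the same basic route the paper intends---mollify the piecewise-linear corners---but where the paper disposes of the corollary with the single remark that ``by smoothing the corners of the construction, the perturbation can be made smooth,'' you supply the substance that remark silently leans on. In particular, you correctly flag the real mathematical point: a generic $C^{\infty}$ smoothing of a tent map does \emph{not} automatically preserve the tent's full symbolic dynamics, since the resulting smooth unimodal map could be renormalizable or have a periodic window. Your fix---rescaling so that the smoothed unimodal map remains \emph{full} (critical value hitting the right endpoint, which in turn maps to the repelling boundary fixed point), giving a full Markov partition and hence a semi-conjugacy onto the one-sided 2-shift---is exactly what is needed to justify ``same qualitative dynamics,'' and it is a genuine improvement over the paper's proof, which does not address this obstruction at all.

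One small caveat worth tightening: semi-conjugacy onto the full shift directly gives topological entropy at least $\log 2$ and density of periodic orbits, but transitivity (a dense orbit) does not follow formally from the factor structure alone; for a $C^{2}$ full unimodal map with non-flat critical point you should invoke the no-wandering-intervals theorem to upgrade the semi-conjugacy to an actual conjugacy with the tent map, which then delivers transitivity for free. You can verify the non-flatness of the critical point directly from the mollification---convolving the tent with a symmetric bump gives $f''(c)=-4\rho_{\eta}(0)<0$---so this is a cosmetic fix, not a hole. The two-dimensional extension via a smooth partition of unity in the normal direction to the diagonal, preserving the $\mathbb{Z}_{2}$-symmetry and the $C^{0}$ bound, is sound and matches what the theorem's proof already does in the piecewise-linear case.
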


\subsection{Evidence of two-dimensional chaos}

As we have shown above, almost any type of chaotic dynamics - including double
scroll chaos - can be obtained from the ideal map by inserting specific
localized perturbations. We shall now show that the introduction of a fairly
general type of small $C^{0}$ perturbation is apt to produce chaotic dynamics.
In order to produce two-dimensional chaos lets use the perturbation,
\begin{align*}
\label{Eq: 2Dpert}\varphi_{k,\sigma}(x)  &  =3.7[a_{0}-.02-\sum_{j=1}^{k}%
a_{j}\text{cos}j{\pi}x]\\
a_{0,\sigma}(x)  &  =\frac{\mu_{x}+\lambda_{x}}{8}+\frac{\frac{1}{2}+\epsilon
}{4}\\
a_{j,\sigma}(x)  &  =\frac{1}{2j\pi}\{(1-\lambda_{x})\text{sin}j(\frac{1}%
{2}-\epsilon)\pi+\frac{\mu_{x}+\lambda_{x}}{j\pi}[(-1)^{j}-1]\}
\end{align*}
to get the map,
\begin{equation}
\label{Eq: 2Dpert}%
\begin{split}
\xi &  :=\frac{\varphi_{m,\delta}(y)(1-\varphi_{n,\epsilon}(x))}%
{1-(1-\varphi_{m,\delta}(y))(1-\varphi_{n,\epsilon}(x))}\\
\eta &  :=\frac{\varphi_{n,\epsilon}(x)(1-\varphi_{m,\delta}(y))}%
{1-(1-\varphi_{m,\delta}(y))(1-\varphi_{n,\epsilon}(x))}%
\end{split}
\end{equation}

The iterates of this map, with their characteristic splattering indicative of
chaos, are shown in Fig. \ref{Fig: Chaos2D}.

\begin{figure}[ptbh]
\begin{center}
\includegraphics[scale=.26]{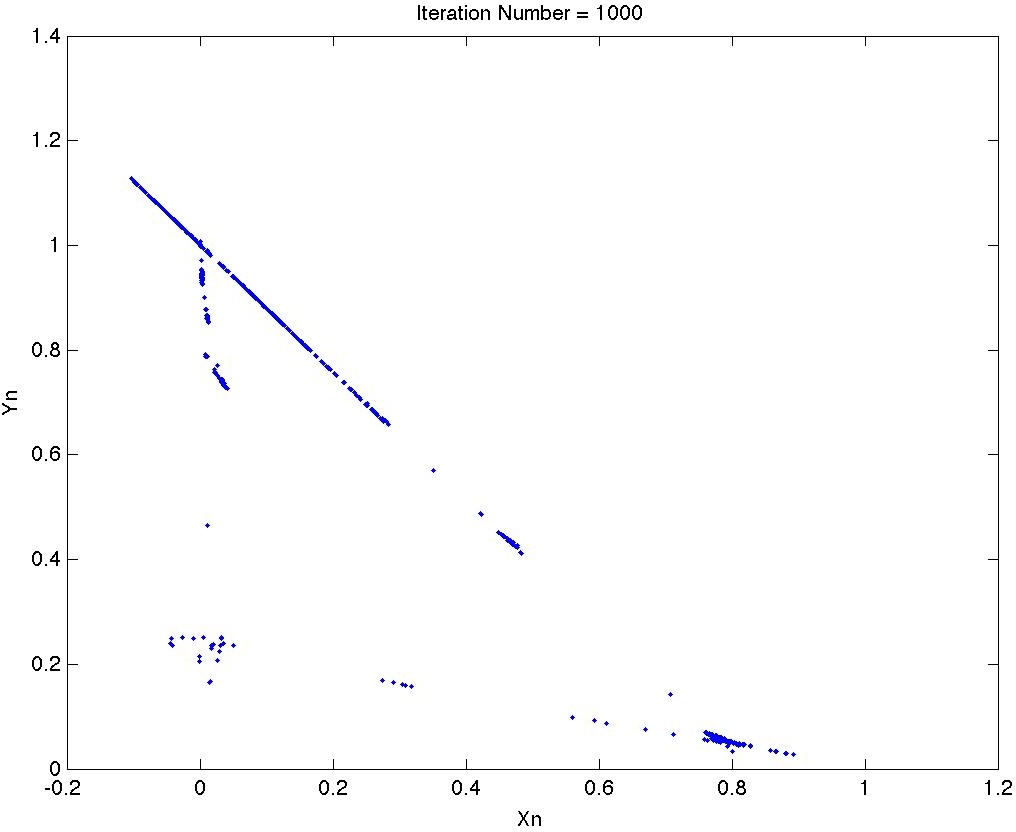}
\end{center}
\caption{Iterates of a perturbed map seeming to exhibit two-dimensional
chaos.}%
\label{Fig: Chaos2D}%
\end{figure}

\subsection{Ring oscillator example}

Another interesting circuit, which we intend to analyze in detail in a
forthcoming paper, is a modified ring oscillator. This ring oscillator
comprises three NOR gates with feedbacks, in a very similar fashion to that of
the RS flip-flop circuit. One may even choose to think of this as a
three-dimensional RS flip-flop circuit.

\begin{figure}[ptbh]
\begin{center}
\includegraphics[scale=.75]{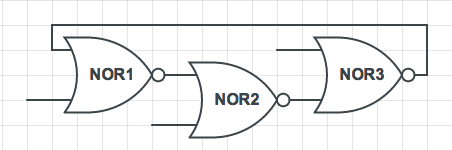}
\end{center}
\caption{A schematic of a ring oscillator designed out of three NOR gates.}%
\end{figure}

We note that if the inputs are set to zero, the system is precisely a ring
oscillator because the NOR gates now act as inverters. Applying our algorithm
for finding a discrete dynamical system model for logical circuits yields the
following \textquotedblleft ideal\textquotedblright\ (unperturbed) model,%

\begin{align}
\label{Eq: Ring_Ideal}\xi= \frac{(x-1)(y(z-1)+1)}{(x-1(y-1)(z-1)-1)}%
,\nonumber\\
\eta= \frac{(y-1)(z(x-1)+1)}{(x-1(y-1)(z-1)-1)},\\
\zeta= \frac{(z-1)(x(y-1)+1)}{(x-1(y-1)(z-1)-1)};\nonumber
\end{align}

The only valid fixed point is $(x_{\ast},y_{\ast},z_{\ast})=((3-\sqrt
{5})/2,(3-\sqrt{5})/2,(3-\sqrt{5})/2)$. Since the fixed points are roots of a
quartic equation, one may wonder what happens to the other roots. We find that
the other three roots are out of our domain, which means that they are of no
practical consequence. For the ideal model, regardless of the initial
conditions, the orbits decay to the fixed point in a spiral manner shown in
Fig. \ref{Fig: Ring_Ideal}.

\begin{figure}[ptbh]
\includegraphics[scale=.14]{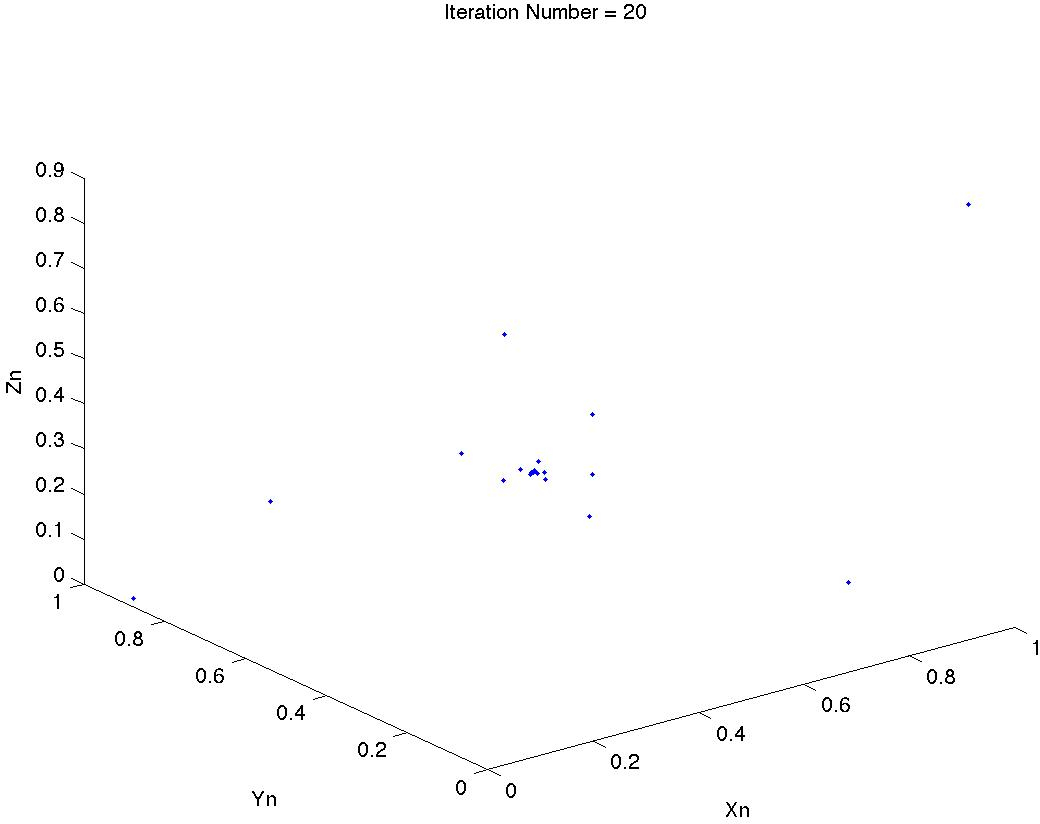}
\includegraphics[scale=.14]{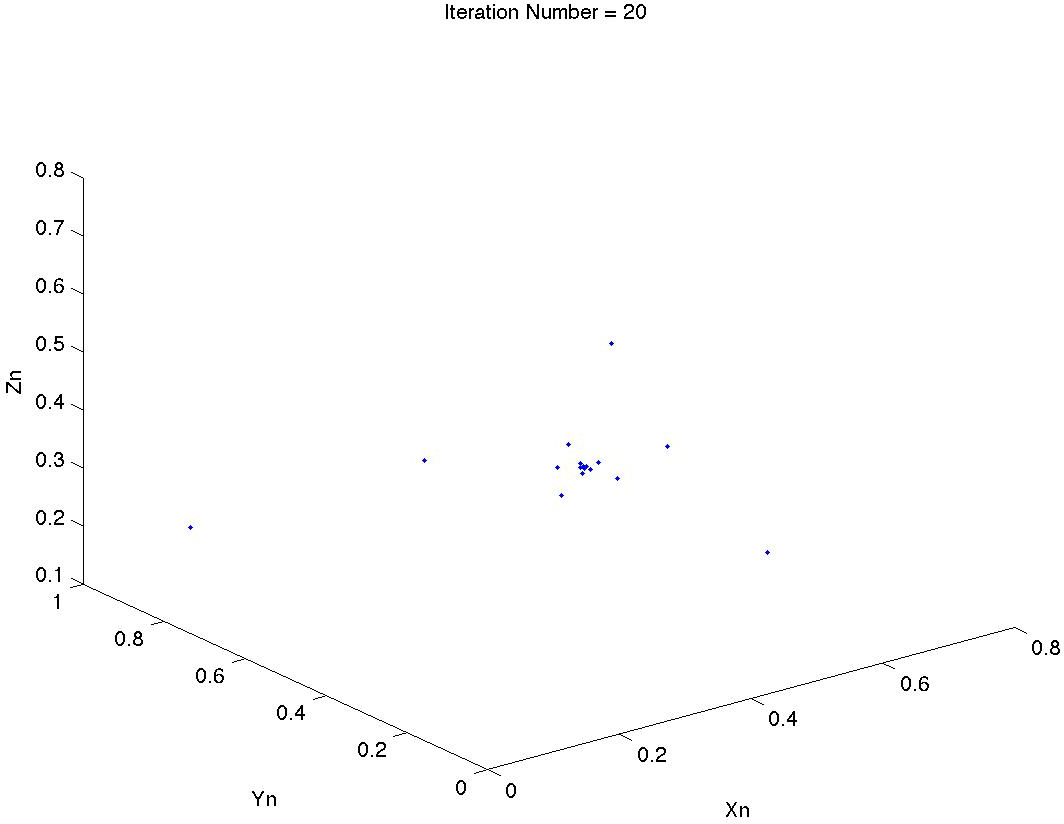}
\includegraphics[scale=.14]{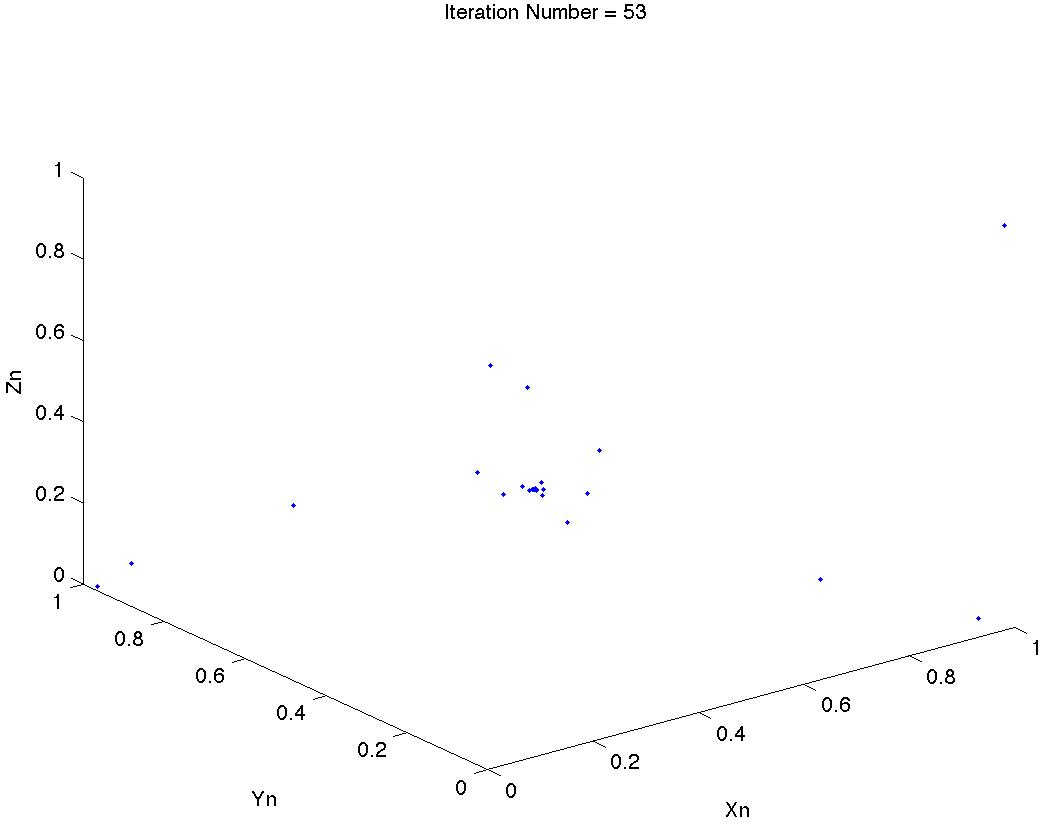}\caption{Plots of the orbits of
Eq. \ref{Eq: Ring_Ideal} with initial conditions, $(.99,.1,.9)$, $(.1,.9,.2)$,
and $(.1,.1,.9)$ respectively.}%
\label{Fig: Ring_Ideal}%
\end{figure}

In order to produce more interesting dynamics we perturb the ideal model
slightly. Using a perturbation similar to (\ref{Eq: 2Dpert}), we produce the
chaotic dynamics shown in Fig. \ref{Ring_Chaotic}.

\begin{figure}[ptbh]
\begin{center}
\includegraphics[scale=.65]{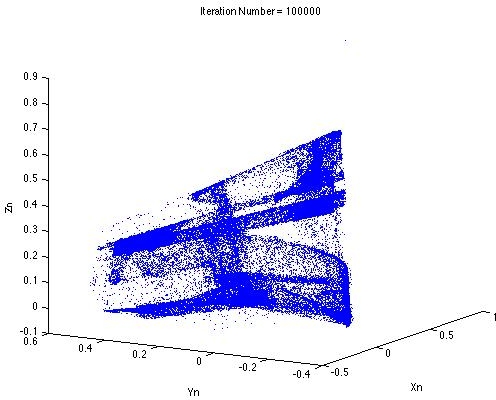}
\end{center}
\caption{Plots of the orbits of the perturbed model with initial conditions
$(.99,.1,.9)$.}%
\label{Ring_Chaotic}%
\end{figure}

\section{Concluding Remarks}

We have introduced and analyzed a rather simple discrete (ideal) dynamical
model - grounded on first principles - for the RSFF circuit, which is based
upon the iterates of a planar map. Moreover, we have proved that this model
can be modified - by arbitrarily small $C^{0}$ perturbations - to produce just
about any dynamical property observed in physical realizations of RSFF and
related flip-flop circuits. We have also shown that rather general small
perturbations are apt to change the structurally stable ideal model to a
two-dimensional discrete dynamical system with chaotic dynamics and related
artifacts such as strange chaotic attractors.

Naturally, we are planning to extend this discrete dynamical systems approach
to a much wider class of logical circuits and their perturbations, and verify
the effectiveness of this approach by comparing our dynamic predictions with
those of more standard ODE approaches as well as experimental data extracted
from actual physical circuit measurements. We also plan to address a number of
related questions such as showing that logical circuit realizations represent
perturbations of our ideal models in some sense, and that the perturbations
can actually be characterized and quantified. Such an investigation should
provide insight into the relationship that we believe exists between our
discrete dynamical systems approach and the underlying discrete dynamics of
reconfigurable chaotic logic gates \cite{DMM}.

\section*{Acknowledgements}

The authors would like to thank Ian Jordan for sharing his expertise on
logical circuits, which proved to be very helpful in this investigation.

\end{document}